\theoremstyle{definition}
\newtheorem{lemma}{\bf Lemma}
\newtheorem{corollary}{\bf Corollary}
\theoremstyle{remark}
\newtheorem{remark}{\bf Remark}
\acrodef{ofdm}[OFDM]{orthogonal frequency division multiplexing}%
\acrodef{miso-ofdm}[MISO-OFDM]{multi-input single-output orthogonal frequency division multiplexing}%
\acrodef{ris}[RIS]{reconfigurable intelligent surface}%
\acrodef{qos}[QoS]{quality of service}%
\acrodef{idft}[IDFT]{inverse discrete Fourier transform}%
\acrodef{dft}[DFT]{discrete Fourier transform}%
\acrodef{cp}[CP]{cyclic prefix}%
\acrodef{csi}[CSI]{channel state information}%
\acrodef{awgn}[AWGN]{additive white Gaussion noise}%
\acrodef{qcqp}[QCQP]{quadratically constrained quadratic program}%
\acrodef{qp}[QP]{quadratic program}%
\acrodef{bs}[BS]{base station}%
\acrodef{ap}[BS]{base station}%
\acrodef{aps}[APs]{access points}%
\acrodef{qos}[QoS]{quality of service}%
\acrodef{ue}[UE]{user equipment}%
\acrodef{snr}[SNR]{signal-to-noise ratio}%
\acrodef{mmwave}[mmWave]{millimeter-wave}%
\acrodef{snr}[SNR]{signal-to-noise ratio}%
\acrodef{sinr}[SINR]{signal-to-interference-plus-noise ratio}%
\acrodef{ser}[SER]{symbol error rate}%
\acrodef{rc}[RC]{reflection coefficient}%
\acrodef{uavs}[UAVs]{unmanned aerial vehicles}%
\acrodef{mimo}[MIMO]{multiple-input multiple-output}%
\acrodef{noma}[NOMA]{non-orthogonal multiple access}%
\acrodef{ace}[ACE]{adaptive cross-entropy}%
\acrodef{wsr}[WSR]{weighted sum-rate}%
\acrodef{udn}[UDN]{ultra-dense network}%
\acrodef{Udn}[UDN]{Ultra-dense network}%
\def\BibTeX{{\rm B\kern-.05em{\sc i\kern-.025em b}\kern-.08em
    T\kern-.1667em\lower.7ex\hbox{E}\kern-.125emX}}
\begin{document}
\title{Enabling More Users to Benefit from Near-Field Communications: From Linear to Circular Array}
\author{\IEEEauthorblockN{Zidong Wu, Mingyao Cui, and Linglong Dai, \emph{Fellow, IEEE}}

\thanks{
This work was supported in part by the National Key Research and Development Program of China under Grant 2020YFB1807201, and in part by the European Commission through the H2020-MSCA-ITN META WIRELESS Research Project under Grant 956256. \emph{(Corresponding author: Linglong Dai.)}

The authors are with the Department of Electronic Engineering, Tsinghua University, Beijing 100084, China, and also with Beijing National Research Center for Information Science and Technology (BNRist), Beijing 100084, China (e-mails: wuzd19@mails.tsinghua.edu.cn; cuimy16@tsinghua.org.cn; daill@tsinghua.edu.cn).}
}

\maketitle
\vspace{-4em}
\begin{abstract}
Massive multiple-input multiple-output (MIMO) for 5G is evolving into the extremely large-scale antenna array (ELAA) to increase the spectrum efficiency by orders of magnitude for 6G communications. ELAA introduces spherical-wave-based near-field communications, where channel capacity can be significantly improved for single-user and multi-user scenarios. Unfortunately, the near-field region at large incidence/emergence angles is greatly reduced with the widely studied uniform linear array (ULA). Thus, many randomly distributed users may fail to benefit from near-field communications. In this paper, we leverage the rotational symmetry of uniform circular array (UCA) to provide uniform and enlarged near-field regions at all angles, enabling more users to benefit from near-field communications. Specifically, by exploiting the geometrical relationship between UCA and users, the near-field beamforming technique for UCA is developed. Based on the analysis of near-field beamforming, we reveal that UCA is able to provide a larger near-field region than ULA in terms of the effective Rayleigh distance. Moreover, a concentric-ring codebook is designed to realize efficient codebook-based beamforming in the near-field region. In addition, we find out that UCA could generate orthogonal near-field beams along the same direction when the focal point of the near-field beam is exactly the zeros of other beams, which has the potential to further improve spectrum efficiency in multi-user communications compared with ULA. Simulation results are provided to verify the effectiveness of theoretical analysis and feasibility of UCA to enable more users to benefit from near-field communications by broadening the near-field region.
\end{abstract}
\begin{IEEEkeywords}
Extremely large-scale antenna array (ELAA), near-field communications, uniform circular array (UCA), beamforming, concentric-ring codebook.
\end{IEEEkeywords}

\section{Introduction}\label{sec: intro}
\par With the increasing demand for data transmission, spectrum efficiency has become one of the most important key performance indicators (KPIs) for wireless communications~\cite{Navrati'16'}. In current 5G communications, massive multiple-input multiple-output (MIMO) has played a central role in increasing spectrum efficiency by orders of magnitude~\cite{Marzetta'14'j}. To meet higher requirements of spectrum efficiency for 6G, extremely large-scale MIMO (XL-MIMO) or extremely large-scale antenna array (ELAA) has attracted tremendous attention, which is featured by ten times more antennas at the base station (BS) compared with 5G massive MIMO systems~\cite{Zengyong'22'twc}.

\par The dramatically increased array aperture introduces near-field communications in ELAA systems~\cite{Marzetta'22'twc}. Specifically, in classical massive MIMO systems for 5G, since the antenna array at the BS is not very large, the electromagnetic wave modeled by near-field spherical wave could be simplified into the one modeled by far-field planar wave, which contributes to the concise system analysis and simplified system design~\cite{Ayach'14'j}. Nevertheless, since the phase discrepancy between the planar-wave and the spherical-wave model could not be ignored anymore due to the significantly enlarged array aperture, the widely-adopted far-field planar-wave model has become inaccurate for ELAA systems~\cite{cui'22'm}. Thus, the accurate near-field spherical-wave model has to be adopted instead in ELAA systems. To sum up, the expansion of the array size inevitably leads to the fundamental change in the electromagnetic propagation environment in 6G ELAA systems.
\vspace{-1em}
\subsection{Prior Works}\label{sec:intro prior}
\par Existing works on near-field communications can be mainly classified into two categories: Alleviating the performance degradation caused by the near-field effect and exploiting the near-field effect to enhance the system performance~\cite{cui'22'm}.
\par For the first category, due to the mismatch of the near-field propagation model and far-field communication techniques, many works were investigated to overcome the severe performance degradation of existing far-field techniques in the near-field region, such as channel state information (CSI) acquisition~\cite{Cui'22'tcom,han'20'j,You'22'beam,Xiu'22'j,Wen'22'jstsp}. The traditional angular-domain representation for far-field channels experiences serious power leakage in the near-field region, greatly reducing the accuracy of channel estimations~\cite{Cui'22'tcom}. Specifically, due to the adopted planar-wave model, the angular-domain representation could be employed to capture the angular sparsity of far-field channels. Plenty of algorithms including compressive sensing and deep learning have been proposed to exploit the angular sparsity to realize low-overhead channel estimations~\cite{Alkhateeb'14'ce,Gaozhen'21'jsac}. Nevertheless, it has been demonstrated that the angular sparsity does not hold anymore with near-field spherical-wave models, making far-field channel estimation methods fail to recover the near-field channels accurately~\cite{han'20'j}. To address this problem, a novel polar-domain representation was proposed in~\cite{Cui'22'tcom}, which performed uniform angular sampling and non-uniform distance sampling to successfully retrieve the sparsity of near-field channels. This representation method was also employed to perform fast beam training in the near-field region~\cite{You'22'beam}. In addition, it has been revealed that the circular array could be leveraged to reduce the size of near-field codebook, which is promising to reduce beam training overhead in near-field communications~\cite{ningboyu'23'wcl}.
\par After obtaining accurate CSI, efficient beamforming tailored for near-field models should also be performed to replace classical far-field beamforming in the near-field region to avoid beamforming loss~\cite{emil'21,zhang'22'j,zhanghaiyang'22'm}. Precisely, beamforming is realized by compensating for the phase differences at different antennas to achieve the constructive superposition of electromagnetic waves. It was revealed in~\cite{cui'22'm} that due to the remarkable phase discrepancy between planar waves and spherical waves, the far-field beamforming fails to compensate for the signal phases in near-field regions, where the constructive superposition could not be formed to obtain ideal beamforming gain. To cope with this problem, several recent works have adopted the spherical-wave model, instead of planar-wave model, to perform near-field beamforming to avoid inaccurate beamforming in the study of antenna designs~\cite{Waldschmidt'21'tmtt,chang'20'tap,Mei'22'tap}. In~\cite{Ribeiro'21'conf}, low-complexity zero-forcing precoding algorithms were developed to cope with the overwhelming complexity problem in fully-digital near-field communication systems, where the algebraic property of near-field channels was exploited to reduce the dimension of precoding matrices. Moreover, the property of near-field beamforming from the perspective of communication was investigated in~\cite{zhang'22'j}. In addition to the focusing ability in the angular domain as far-field beamforming,~\cite{zhang'22'j} pointed out that near-field beamforming is capable of focusing the signal power on a specific distance, which was also termed \emph{near-field beamfocusing}. In~\cite{zhanghaiyang'22'm}, the near-field beamforming techniques are further generalized into wireless power transfer scenarios.

\par In the second category, which is also the main focus of this paper, many works focused on the new opportunities brought by near-field communications. One opportunity is the increased spatial degrees of freedom (DoFs) for single-user communications. For far-field planar-wave models, since the line-of-signal (LoS) path of single-user MIMO only covers one direction, it was illustrated that this LoS channel matrix is rank-one and could only support one data stream~\cite{Sayeed'2013,Zijian'22'icc}. Surprisingly, it was proved that owing to the spherical-wave model, the near-field LoS path could cover a large range of directions, which may contribute to $10 \times$ more DoFs and data streams in the near-field region~\cite{yan'22'j}. To efficiently exploit the increased DoFs to improve spectrum efficiency, a dynamic hybrid precoding architecture was developed in~\cite{Wu'22'dap}, where the number of radio frequency (RF) chains could be designed to adapt to the DoFs for capacity enhancement. The second opportunity of near-field communications is that distance information could serve as a new utilizable dimension for multi-user communication capacity enhancement. Specifically, the classical spatial division multiple access (SDMA) employs directional beams to serve users at different angles, which fails to simultaneously serve users in the same direction with the pure LoS channel assumption. In contrast, owing to the enhanced focusing property of near-field beams, users at different distances could also be served and the concept of location division multiple access (LDMA) was proposed to significantly increase the spectrum efficiency~\cite{Wu'22'jsac}.  

\par On account of the significant benefits of near-field communications, it is desired to create near-field propagation environments for more users to significantly improve the system performance. Unfortunately, it has been noticed that the near-field region dramatically shrinks at large incidence/emergence angles of the widely-adopted uniform linear array (ULA)~\cite{cui'21}. The reduced near-field region is caused by the decreased effective array aperture at large angles, which results in a smaller phase difference between planar-wave and spherical-wave models. Therefore, near-field communications become prominent only in the very near region at large angles, resulting in many users failing to take advantage of the benefits of near-field communications. This phenomenon severely limits the performance improvement of ELAA systems. Therefore, how to enable more users to benefit from near-field communications becomes a critical problem.
\vspace{-1em}
\subsection{Our Contributions}\label{sec:intro contr}
\par To tackle this problem, in this paper we consider a variant array geometry, uniform circular array (UCA), to enlarge the near-field region, making it possible for more users to benefit from near-field communications\footnote{Simulation codes are provided to reproduce the results presented in this article: http://oa.ee.tsinghua.edu.cn/dailinglong/publications/publications.html.}. The contributions of this paper can be summarized as follows.
\begin{itemize}
	\item We exploit the rotational symmetry of UCA to explore the feasibility of changing the array geometry to enlarge near-field regions for the first time. In fact, the near-field region is determined by the effective array aperture observed from users. To avoid the near-field shrinkage problem at large incidence/emergence angles of ULA, we employ the UCA to generate an angle-invariant near-field region by utilizing its feature that the effective array aperture of UCA remains uniform at any angle. Following the concept of effective Rayleigh distance (ERD) to define the near-field region~\cite{cui'21}, we verify that the near-field region of UCA significantly expands at large angles compared with ULA. According to the fact that users are randomly distributed in a cell, many users are likely to locate in large angles. Therefore, the enlarged near-field region could enable more users to benefit from near-field communications.
	\item By exploiting the geometrical relationship between UCA and users, the near-field beamforming mechanism of UCA is theoretically analyzed. Different from existing works on UCA which only consider planar-wave propagation models~\cite{Kallnichev'2001,Hussain'2005,Zhang'17'tawp}, the beamfocusing property of UCA based on spherical-wave model is investigated. We prove that the near-field beamforming gain of UCA is in the form of Bessel functions in both angle and distance domains. The derived closed-form beamforming gain reveals that the far-field beamforming techniques for UCA suffer from severe performance loss in the near-field region, or even has several {\emph{zero}} beamforming gain at some specific locations. These findings highlight the uniqueness of UCA near-field beamforming.
	\item Based on the UCA near-field beamforming mechanism, more near-field communication techniques are investigated. To design the near-field codebook for UCA to realize codebook-based beamforming, an efficient sampling method is performed in both angle and distance domains to form the concentric-ring codebook. Specifically, according to the rotational symmetry of UCA, we reveal that the distance sampling remains the same for any angle. Exploiting this property, the selected samples form the shape of multiple concentric rings, the radius of which is determined by the distance sampling rule. Therefore, the concentric-ring codebook for UCA can be constructed with near-field beamforming vectors focused energy on those sampling points. In addition, noticing the zero beamforming gain in the near-field region, we find out that UCA could generate orthogonal near-field beams along the same direction, which has the potential to mitigate interferences for spectrum efficiency enhancement compared with ULA.
	\item Moreover, beamforming for a more practical version of UCA, cylindrical array, is discussed in this paper. We derive a closed-form near-field beamforming gain for the cylindrical array when users are located in the central plane of the cylindrical array. Through simulations, we verify that UCA could provide an enlarged near-field region compared with ULA and the effectiveness of the near-field beamforming analysis is also validated.
\end{itemize}
\vspace{-1em}
\subsection{Organization and Notation}\label{sec:intro org}
\textit{Organization}: The remainder of the paper is organized as follows. Section~\ref{sec: sys} introduces the system model and near-field beam focusing vectors for UCA. Section~\ref{sec: beamform} investigates the near-field beamforming techniques for UCA systems in both angular and distance domains in comparison to ULA. Then, the near-field concentric-ring codebook for UCA is provided in~\ref{sec: codebook}. The generalization from UCA to cylindrical arrays is discussed in Section~\ref{sec: cyl}. Simulation results are provided in Section~\ref{sec: sim} and conclusions are drawn in Section~\ref{sec: con}.

\textit{Notations}: Lower-case and upper-case boldface letters represent vectors and matrices, respectively; $\mathbb{C}$ denotes the set of complex numbers; ${[\cdot]^{-1}}$, ${[\cdot]^{T}}$ and ${[\cdot]^{H}}$ denote the inverse, transpose and conjugate-transpose, respectively; $|\cdot|$ denotes the norm of its complex argument; $\mathcal{CN}(\mu,\sigma^2)$ denotes the complex Gaussian distribution with mean $\mu$ and variance $\sigma^2$.
\vspace{-1em}
\section{System Model}\label{sec: sys}
\par We consider a downlink transmission between the BS equipped with $N$-element UCA and single-antenna users. For illustration simplicity, we assume that the user is restricted in the 2D space of the same plane of UCA in this paper unless otherwise stated. Following exemplary UCA models in~\cite{Tewfik'90'c,Gentile'08'icc}, the antennas are uniformly distributed along the circle with a radius of $R$. For ease of expression, the polar coordinate is adopted to describe the geometrical relationships. The antennas can be represented with the location indexed by $(R, \psi_n)$, where $\psi_n$ is defined as $\psi_n = \frac{2\pi n}{N}$ for $n = 1,\cdots,N$. Then, the received signal $y$ at the user is expressed as
\begin{equation}
\label{eq: received signal}
\begin{aligned}
y = {\bf{h}}^H {\bf{f}} s + n,
\end{aligned}
\end{equation}
where ${\bf{h}} \in \mathbb{C}^{N \times 1}$ , ${\bf{f}} \in \mathbb{C}^{N \times 1}$, $s \in \mathbb{C}$, and $n \sim \mathcal{CN}(0,\sigma_n^2)$ denote the wireless channel, beamforming vector, transmitted signal, and additive white Gaussian noise (AWGN), respectively. In the classical massive MIMO systems where the far-field planar-wave model can be applied, the channel could be modeled by~\cite{Zhang'17'tawp}
\begin{equation}
\label{eq: far-field channel}
\begin{aligned}
{\bf{h}}_{\rm{far}} = \sqrt{\frac{N}{L}} \sum_{l=1}^{L} \alpha_l {\bf{a}}(\phi_l),
\end{aligned}
\end{equation}
where $L$ denotes the number of paths. Moreover, $\alpha_l$ and $\phi_l$ denote the complex path gain and the angle of the $l^{\rm{th}}$ path, respectively. In this paper, we mainly focus on the line-of-sight (LoS) component of channels, which is dominant and contributes most to communications, especially in high-frequency bands, such as millimeter-wave (mmWave) and terahertz (THz)~\cite{Ayach'14'j}. Therefore, the number of paths $L$ is set to $1$ throughout this paper. Then, the channel could be directly built with the beam steering vector ${\bf{a}}(\phi_l)$ under the far-field planar-wave propagation model as~\cite{Kallnichev'01}
\begin{equation}
\label{eq: far beam vector}
\begin{aligned}
{\bf{a}}(\phi) = \frac{1}{\sqrt{N}} \left[e^{j\frac{2\pi}{\lambda}R\cos(\phi-\psi_1)},\cdots, e^{j\frac{2\pi}{\lambda}R\cos(\phi-\psi_N)} \right]^T.
\end{aligned}
\end{equation}
\par Nevertheless, with the increasing number of antennas in ELAA, the far-field planar-wave propagation model is no longer valid in the near-field region, where the spherical-wave propagation model has to be adopted to characterize the propagation of electromagnetic waves~\cite{cui'22'm}. Traditionally, the concept of Rayleigh distance is defined as the boundary of near- and far-field regions, which is expressed as~\cite{Sherman'62'j}
\begin{equation}
\label{eq: rayleigh}
\begin{aligned}
r_{\rm{RD}} = \frac{2D^2}{\lambda},
\end{aligned}
\end{equation}
where $D$ denotes the array aperture and $\lambda$ denotes the wavelength. Note that the increased number of antennas will yield a larger Rayleigh distance. For instance, the $256$-element ULA operating at 30 GHz leads to the Rayleigh distance of about $300\,{\rm{m}}$, indicating that most users in a cell are located in the near-field region. This example illustrates that near-field communications happen frequently in ELAA systems, and thus the near-field channel model has to be adopted.
\begin{figure}[!t]
    \centering
    \setlength{\abovecaptionskip}{0.cm}
    \includegraphics[width=3in]{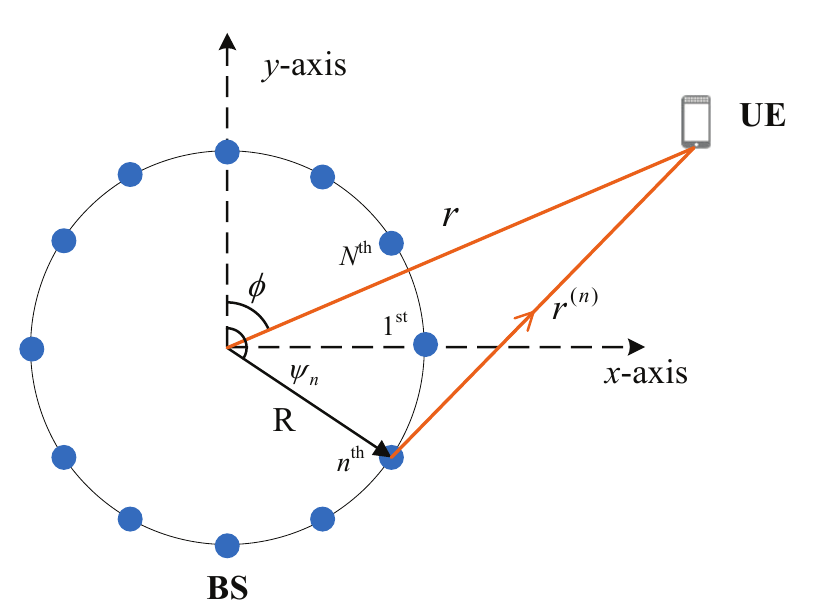}
    \caption{The geometrical relationship between UCA and user in the near-field region.}
    \label{img: UCA strcture}
\end{figure}
\par Based on the spherical-wave propagation model, the near-field channel could be expressed similarly as
\begin{equation}
\label{eq: near-field channel}
\begin{aligned}
{\bf{h}}_{\rm{near}} = \sqrt{\frac{N}{L}} \sum_{l=1}^{L} \alpha_l {\bf{b}}(r_l, \phi_l),
\end{aligned}
\end{equation}
where the classical {\emph{far-field beam steering vector}} ${\bf{a}}(\phi)$ is replaced by {\emph{near-field beam focusing vector}} ${\bf{b}}(r, \phi)$, which is capable of focusing the signal power on the position $(r, \phi)$~\cite{Cui'22'tcom}. Again, the number of paths $L=1$ is set. It is worth noting that, although the LoS channel is the main focus of this paper, the near-field determination method can also be applied to scatterers in non-line-of-sight (NLoS) channels. A scatterer is distinguished in the near-field region if the BS-scatterer distance does not exceed the Rayleigh distance. Correspondingly, similar beamfocusing vectors could be employed to model NLoS channels. As shown in Fig.~\ref{img: UCA strcture}, the near-field beam focusing vector can be described as
\begin{equation}
\label{eq: near beam vector}
\begin{aligned}
{\bf{b}}(r, \phi) = \frac{1}{\sqrt{N}} \left[e^{-j\frac{2\pi}{\lambda}(r^{(1)}-r)},\cdots, e^{-j\frac{2\pi}{\lambda}(r^{(N)}-r)} \right]^T,
\end{aligned}
\end{equation}
where $r^{(n)}$ denotes the propagation distance between the user and the $n^{th}$ antenna of UCA. The propagation distance $r^{(n)}$ can be formulated as
\begin{equation}
\label{eq: distance near}
\begin{aligned}
r^{(n)} & = \sqrt{r^2+R^2-2rR\cos(\phi-\psi_n)}\\
& \mathop {\approx}\limits^{(a)} r - R\cos(\phi-\psi_n) + \frac{R^2}{2r}\left(1-\cos^2(\phi-\psi_n)\right)\\
& = r + \xi_{r,\phi}^{(n)},
\end{aligned}
\end{equation}
where $\xi_{r,\phi}^{(n)} = - R\cos(\phi-\psi_n) + \frac{R^2}{2r}\left(1-\cos^2(\phi-\psi_n)\right)$ denotes the difference between the propagation distance to the $n^{th}$ antenna and the propagation distance to the origin. The approximation (a) is derived from the second-order Taylor series expansion $\sqrt{1+x} = 1 + \frac{x}{2} - \frac{x^2}{8} + \mathcal{O}(x^3)$ assuming $r$ is large compared to the other terms\footnote{The second-order approximation is accurate enough when $r$ is larger than the Fresnel distance $r = \frac{D}{2}(\frac{D}{\lambda})^{1/3}$, which is often comparable to the array aperture and thus could be commonly satisfied~\cite{Janaswamy'17'm}.}. 

\begin{remark}
Similar to the near-field analysis for ULA systems~\cite{Cui'22'tcom}, when Taylor series expansion only keeps the first-order term, the propagation distance $r^{(n)}$ degenerates to the far-field condition where $r^{(n)} \approx r - R\cos(\phi-\psi_n)$. In this case, the near-field beam focusing vector~\eqref{eq: near beam vector} naturally degenerates into the far-field form defined in~\eqref{eq: far beam vector}. Therefore, the far-field beam steering vector is a special case of near-field beam focusing vector without higher-order Taylor series expansions.
\end{remark}

Different from far-field beam steering vectors which are only able to focus the signal power on specific angles, near-field beam focusing vectors could focus the power on specific angles and distances, i.e. specific locations in the whole two-dimensional (2D) space~\cite{zhang'22'j}. Therefore, apart from the angle information, the distance information of users also plays an indispensable role when performing beamforming. To this end, characterizing the beamforming property of UCA in the 2D space is crucial, which will be elaborated as follows.

\section{Analysis of UCA Near-Field Beamforming}\label{sec: beamform}
Since the property of near-field beamforming with UCA has fundamentally changed, in this section we aim to analyze its beamforming gain in both the angular and distance domains.
\vspace{-1em}
\subsection{Analysis of Beamforming Gain in the Angular Domain}\label{sec: ana angular}
The beamforming property in the angular domain is first discussed. Based on the widely adopted matched-filter beamforming methods, where the beamforming vectors are designed as the conjugate of near-field beam focusing vectors, the beamforming gain can be formulated as
\begin{equation}
\label{eq: gain near}
\begin{aligned}
&~~g(r_1, \phi_1, r_2, \phi_2) = |{\bf{b}}^H(r_1, \phi_1){\bf{b}}(r_2, \phi_2)|\\
& = \frac{1}{N} \left| \sum_{n=1}^{N} e^{j\frac{2\pi}{\lambda}(\sqrt{r_1^2+R^2-2r_1R\cos(\phi_1 - \psi_n)}-r_1)}\right. \\
& ~~~~~~~~ \left. \times e^{-j\frac{2\pi}{\lambda}(\sqrt{r_2^2+R^2-2r_2R\cos(\phi_2 - \psi_n)}-r_2)} \right|\\
&\approx \frac{1}{N} \left| \sum_{n=1}^{N} e^{-j\frac{2\pi}{\lambda}R(\cos(\phi_1-\psi_n) - \cos(\phi_2-\psi_n))} \right. \\
& ~~~~~~~~ \left.\times e^{j\frac{2\pi}{\lambda}R^2 \left( \frac{1-\cos^2(\phi_1-\psi_n)}{2r_1} - \frac{1-\cos^2(\phi_2-\psi_n)}{2r_2} \right) } \right|,
\end{aligned}
\end{equation}
where ${\bf{b}}(r_1, \phi_1)$ and ${\bf{b}}(r_2, \phi_2)$ are beam focusing vectors defined as in~\eqref{eq: near beam vector}. Since we first focus on the variation of the beamforming gain against radiating angles, a fixed propagation distance is assumed, i.e. $r_1=r_2=r$. Then, the property of beamforming gain can be described as follows.
\begin{lemma}
\label{lemma1}
The beamforming gain at location $(r, \phi_1)$ achieved by employing the near-field beamforming vector ${\bf{b}}(r, \phi_2)$ can be proved to approximate the far-field beamforming gain as
\begin{equation}
\label{eq: gain near angular 2}
\begin{aligned}
g(r, \phi_1, r, \phi_2) &= \left|{\bf{b}}^H(r, \phi_1) {\bf{b}}(r, \phi_2) \right|\\
& \approx \left|{\bf{a}}^H(\phi_1) {\bf{a}}(\phi_2) \right| \approx \left| J_0(\beta) \right|, 
\end{aligned}
\end{equation}
where $J_0(\cdot)$ denotes the zero-order Bessel function of the first kind and $\beta = \frac{4\pi R}{\lambda} \sin(\frac{\phi_2-\phi_1}{2})$.
\end{lemma}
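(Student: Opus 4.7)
The plan is to start from the expression for $g(r,\phi_1,r,\phi_2)$ given by the approximation inside equation~\eqref{eq: gain near}, specialized to $r_1=r_2=r$, and reduce it step by step to the Bessel form. First I would substitute $r_1=r_2=r$ in the second-order (distance-dependent) exponent. Since the two $\cos^2$ terms inside that exponent share the constant $1$, which cancels, only $\frac{R^2}{2r}\bigl[\cos^2(\phi_2-\psi_n)-\cos^2(\phi_1-\psi_n)\bigr]$ remains. I would then argue that this residual contribution is negligible relative to the leading first-order angular term because it carries an additional multiplicative factor of $R/r$; for any $r$ at least comparable to the array aperture the induced phase is $O(R/r)\cdot(R/\lambda)$, much smaller than the $O(R/\lambda)$ first-order phase. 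Dropping it formally identifies the near-field gain at fixed $r$ with the far-field inner product $|{\bf a}^H(\phi_1){\bf a}(\phi_2)|$, which establishes the first approximation in~\eqref{eq: gain near angular 2}.

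Next I would attack the first-order piece. Applying the sum-to-product identity $\cos A-\cos B=-2\sin\frac{A+B}{2}\sin\frac{A-B}{2}$ to $\cos(\phi_1-\psi_n)-\cos(\phi_2-\psi_n)$ rewrites the phase of the $n$th summand as $\frac{4\pi R}{\lambda}\sin\!\bigl(\tfrac{\phi_1+\phi_2}{2}-\psi_n\bigr)\sin\!\bigl(\tfrac{\phi_1-\phi_2}{2}\bigr)$. Introducing $\beta=\frac{4\pi R}{\lambda}\sin\frac{\phi_2-\phi_1}{2}$ and using that $J_0$ is even, the quantity to estimate becomes $\bigl|\tfrac{1}{N}\sum_{n=1}^{N}e^{-j\beta\sin(\frac{\phi_1+\phi_2}{2}-\psi_n)}\bigr|$. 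Because $\psi_n=2\pi n/N$ is a uniform partition of $[0,2\pi)$ and the integrand is $2\pi$-periodic, for the ELAA regime of large $N$ the Riemann-sum approximation yields $\tfrac{1}{N}\sum_{n=1}^{N} f(\psi_n)\approx\tfrac{1}{2\pi}\int_0^{2\pi}f(\psi)\,d\psi$. After the change of variable $u=\frac{\phi_1+\phi_2}{2}-\psi$ the expression reduces to $\bigl|\tfrac{1}{2\pi}\int_0^{2\pi}e^{-j\beta\sin u}\,du\bigr|$, which is exactly the integral representation of $J_0(\beta)$; taking the modulus gives $|J_0(\beta)|$ and closes the argument.

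The main obstacle I anticipate is controlling the two approximations simultaneously. For the dropping of the second-order term one needs the bound to hold uniformly in $(\phi_1,\phi_2)$, and a little care is required near configurations where $\cos^2(\phi-\psi_n)$ varies quickly across $n$; the cleanest route is to bound the induced phase and use $|e^{j\theta}-1|\le|\theta|$ to turn the phase error into an additive error on $g$. For the Riemann-sum step the quality of the approximation depends on $N$ relative to the oscillation frequency $\beta$ of the integrand, so one should sketch that $N\gg R/\lambda$ is the relevant regime (automatically satisfied by an UCA with half-wavelength-spaced antennas) to ensure the finite sum tracks the integral to leading order. Either of these can be made rigorous with standard estimates, but conceptually they are the only non-routine pieces; the rest is algebraic manipulation plus invoking the defining integral of $J_0$.
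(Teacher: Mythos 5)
Your argument is correct and reaches the paper's result by what is, at bottom, the same computation, with one presentational difference in the final step. The paper also drops the second-order distance term when $r_1=r_2=r$ (its approximation (a)), applies the same sum-to-product identity to get the phase $\beta\sin(\psi_n-\tfrac{\phi_1+\phi_2}{2})$, and then expands $e^{j\beta\sin(\cdot)}$ via the Jacobi--Anger series $\sum_m j^m J_m(\beta)e^{jm(\cdot)}$; the equispaced sum $\sum_n e^{-jm\psi_n}$ vanishes unless $N\mid m$, and the surviving aliasing terms $J_{Nt}(\beta)$, $t\neq 0$, are killed by the bound $|J_{|m|}(\beta)|\leq(\beta e/(2|m|))^{|m|}$, leaving $|J_0(\beta)|$. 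You instead replace $\tfrac{1}{N}\sum_n$ by $\tfrac{1}{2\pi}\int_0^{2\pi}$ and invoke the integral representation $J_0(\beta)=\tfrac{1}{2\pi}\int_0^{2\pi}e^{-j\beta\sin u}\,{\rm d}u$. These are two faces of the same fact: the error in the equispaced Riemann sum of a $2\pi$-periodic function is exactly the sum of its Fourier coefficients at multiples of $N$, which here are precisely the $J_{Nt}(\beta)$ terms the paper bounds. What the paper's route buys is an explicit, quantitative error estimate ($\Delta_g\leq 2(\beta e/(2N))^N$) and a precise statement of the regime $N\gtrsim\beta$ in which the approximation holds; your "Riemann sum tracks the integral for $N$ large relative to the oscillation frequency" is the right condition but is left qualitative, and if you wanted to make it rigorous you would end up re-deriving the aliasing/Bessel-decay bound anyway. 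Your extra care on the first approximation (noting that the constants cancel and only $\tfrac{R^2}{2r}[\cos^2(\phi_2-\psi_n)-\cos^2(\phi_1-\psi_n)]$ survives, with relative size $O(R/r)$) is a slightly sharper justification than the paper's appeal to "keeping the first-order term," and is consistent with its Remark 2.
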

\begin{proof}
The proof is provided in {\bf Appendix~\ref{app: lemma1}}.
\end{proof}
\par A figure of the absolute value of zero-order Bessel function of the first kind $|J_0(\beta)|$ is shown in Fig.~\ref{img: bessel}. It can be seen that the Bessel function achieves maximum when $\beta = 0$, which is equivalent to the accurate beamforming with $\phi_1 = \phi_2$. When $\beta>0$, $|J_0(\beta)|$ is decreasing with fluctuations as $\beta$ increases. Therefore, the acquirement of accurate angle information is essential to perform effective beamforming. Otherwise, the received signal power may dramatically decrease or even approach zero.
\begin{figure}[!t]
    \centering
    \setlength{\abovecaptionskip}{0.cm}
    \includegraphics[width=3in]{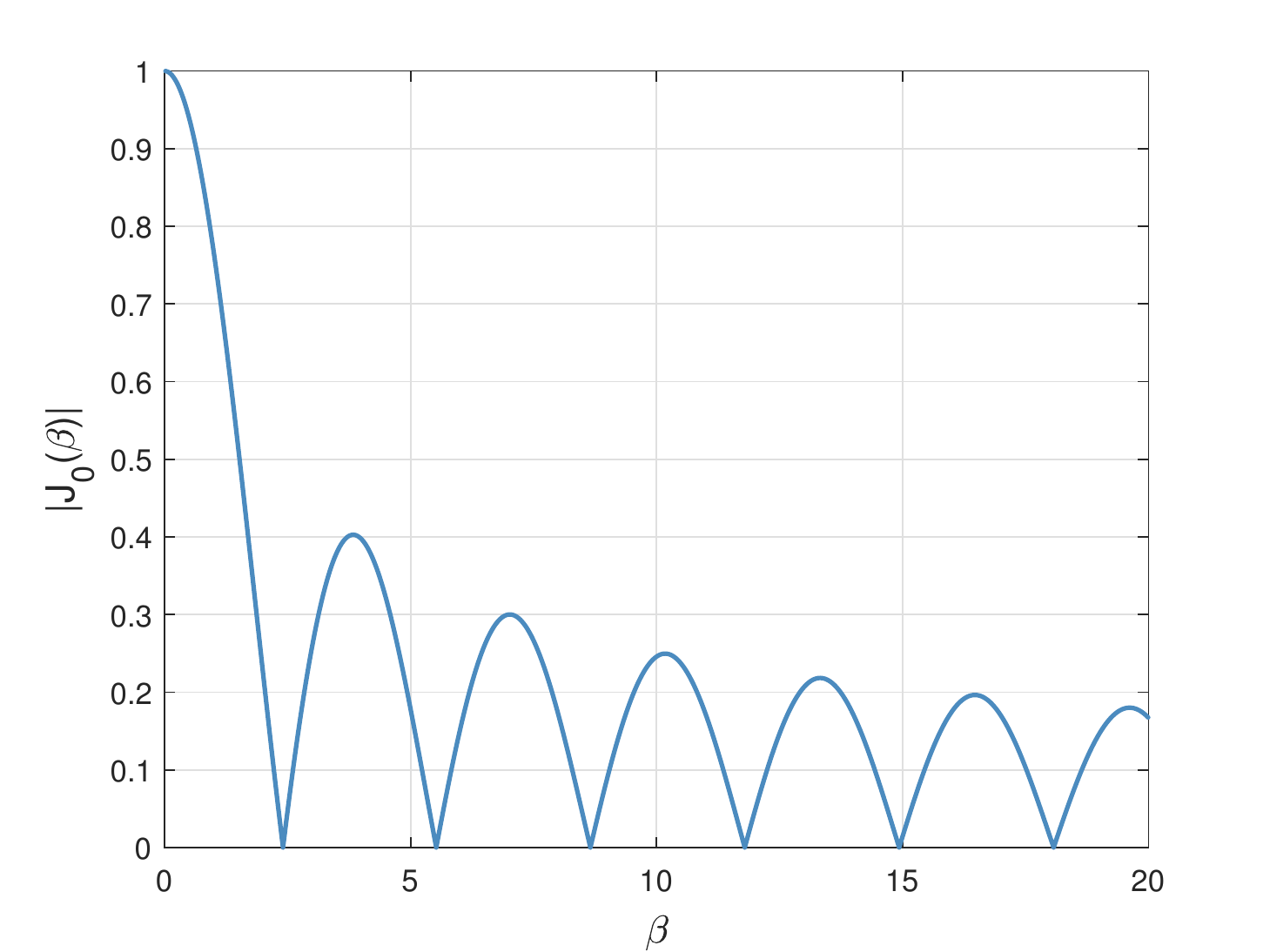}
    \caption{An example of the absolute value of the zero-order Bessel function of the first kind.}
    \label{img: bessel}
\end{figure}

\begin{remark}
The approximation is derived with the assumption that antennas in UCA are not separated too far, which is commonly adopted in existing research on UCA and could be satisfied with half-wavelength spacing~\cite{Zhang'17'tawp,Fanwei'19'tvt}. It is worth noting that only the first-order of Taylor series is kept in the proof, and thus the beamforming gain in the near-field approaches that in the far-field region. The reason for adopting the first-order approximation lies in that even when users are located in the near-field region, the first-order term still dominates when users are located at the same distance. This approximation could obtain good accuracy, which will be verified in simulations. Moreover, the second-order term of Taylor series will significantly affect the beamforming gain when the first-order term is eliminated, which will be shown in the following subsection.
\end{remark}
\vspace{-1em}
\subsection{Analysis of Beamforming Gain in the Distance Domain}\label{sec: ana distance}
What distinguishes near-field beamforming from classical far-field beamforming is that the beamforming gain varies with different distances. In this subsection, we focus on the beamforming performance in the distance domain, which means that $\phi_1 = \phi_2 = \phi$ is satisfied. Then the beamforming gain can be rewritten as
\begin{equation}
\label{eq: gain near distance}
\begin{aligned}
g(r_1, \phi, r_2, \phi) &= |{\bf{b}}^H(r_1, \phi){\bf{b}}(r_2, \phi)|\\
& \approx \frac{1}{N} \left|\sum_{n=1}^{N} e^{j \frac{2\pi}{\lambda}(\xi_{r_1,\phi}^{(n)} - \xi_{r_2,\phi}^{(n)})} \right|\\
& = \frac{1}{N} \left|\sum_{n=1}^{N} e^{j \frac{2\pi}{\lambda}\left\{\left(\frac{R^2}{2r_1}-\frac{R^2}{2r_2}\right)\left(1-\cos^2(\phi-\psi_n)\right)\right\} } \right|,
\end{aligned}
\end{equation}
where $\xi_{r_1,\phi}^{(n)}$ and $\xi_{r_2,\phi}^{(n)}$ are defined as~\eqref{eq: distance near}. The beamforming gain against different distances can be analytically approximated by the following lemma.
\begin{lemma}
\label{lemma2}
The beamforming gain of near-field beam focusing vectors with UCA can be approximated as follows
\begin{equation}
\label{eq: gain near distance 2}
\begin{aligned}
g(r_1, \phi, r_2, \phi) = \left|{\bf{b}}^H(r_1, \phi) {\bf{b}}(r_2, \phi) \right| \approx \left| J_0(\zeta) \right|, 
\end{aligned}
\end{equation}
where the variable $\zeta$ is defined as
\begin{equation}
\label{eq: gain near distance 3}
\begin{aligned}
\zeta = \frac{2\pi R^2}{\lambda}\left|\frac{1}{4r_1}-\frac{1}{4r_2}\right|. 
\end{aligned}
\end{equation}
\end{lemma}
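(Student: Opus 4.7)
The plan is to start from the approximation of $g(r_1,\phi,r_2,\phi)$ already given in~\eqref{eq: gain near distance}, and reduce the sum over antennas to an integral that matches the Bessel integral representation $J_0(x) = \frac{1}{2\pi}\int_0^{2\pi} e^{jx\cos\theta}\,d\theta$.

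First, I would consolidate the exponent. Defining $A := \frac{2\pi R^2}{\lambda}\left(\frac{1}{2r_1}-\frac{1}{2r_2}\right)$, equation~\eqref{eq: gain near distance} becomes $g = \frac{1}{N}\bigl|\sum_{n=1}^N e^{jA(1-\cos^2(\phi-\psi_n))}\bigr|$. Using the half-angle identity $1-\cos^2 x = (1-\cos 2x)/2$, the global phase $e^{jA/2}$ factors out of the modulus, leaving
\begin{equation*}
g \;=\; \frac{1}{N}\left|\sum_{n=1}^N e^{-j\frac{A}{2}\cos(2(\phi-\psi_n))}\right|.
\end{equation*}

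Next I would pass to the continuum. Because $\psi_n=2\pi n/N$ are uniformly spaced on $[0,2\pi)$, and under the same ``antennas not separated too far'' assumption invoked for Lemma~\ref{lemma1}, the normalized sum is a Riemann sum for a continuous, $2\pi$-periodic integrand, so
\begin{equation*}
\frac{1}{N}\sum_{n=1}^N e^{-j\frac{A}{2}\cos(2(\phi-\psi_n))} \;\approx\; \frac{1}{2\pi}\int_0^{2\pi} e^{-j\frac{A}{2}\cos(2(\phi-\psi))}\,d\psi.
\end{equation*}
A change of variable $u = 2(\phi-\psi)$, together with $2\pi$-periodicity of $\cos$, collapses the integration interval of length $4\pi$ onto one of length $2\pi$ with no extra prefactor, giving $\frac{1}{2\pi}\int_0^{2\pi} e^{-j\frac{A}{2}\cos u}\,du$. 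Recognizing the Bessel integral representation and using $J_0(-x)=J_0(x)$, this equals $J_0(A/2)$. Taking absolute values yields $g \approx |J_0(A/2)|$, and since $A/2 = \frac{2\pi R^2}{\lambda}\bigl(\frac{1}{4r_1}-\frac{1}{4r_2}\bigr)$, the modulus absorbs the sign to produce the claimed $\zeta$ in~\eqref{eq: gain near distance 3}.

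The main obstacle is the Riemann-sum-to-integral step: it must be justified that the first-order Taylor terms genuinely vanish (they do, because $\phi_1=\phi_2$, so the same argument used in Lemma~\ref{lemma1} now applies in reverse to kill the $\cos(\phi-\psi_n)$ contribution while elevating the quadratic term to dominance) and that the residual discretization error in replacing the finite sum by the circular average is negligible at practical $N$ and half-wavelength spacing. The trigonometric manipulation and the change-of-variable collapse of the $4\pi$ interval are the easy parts; the conceptual content is that the quadratic Fresnel term, once isolated, has the exact angular dependence $\cos(2(\phi-\psi_n))$ that produces a Bessel function under uniform angular sampling, which is the distance-domain analog of the Bessel behavior observed angularly in Lemma~\ref{lemma1}.
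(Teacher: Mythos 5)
Your proposal is correct, and it reaches $|J_0(\zeta)|$ by a genuinely different route from the paper. The paper's Appendix~B performs the same opening manipulation as you do (the $\cos^2 x = \frac{1+\cos 2x}{2}$ identity and the factoring-out of the global phase), but then expands $e^{j\bar{\zeta}\cos(2\phi-2\psi_n)}$ via the Jacobi--Anger series, applies the discrete orthogonality $\sum_{n=1}^{N} e^{-jm\psi_n} \in \{0, N\}$ to kill all Fourier modes except those aliased to $m \equiv 0$, and bounds the surviving high-order terms by $|J_{|m|}(\beta)| \leq (\beta e/(2|m|))^{|m|}$, yielding an explicit approximation error of order $(\bar{\zeta}e/(2N))^{N}$. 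You instead recognize the $m=0$ Fourier coefficient directly as the integral representation $J_0(x) = \frac{1}{2\pi}\int_0^{2\pi} e^{jx\cos u}\,{\rm d}u$ and justify replacing the sum by the integral as a Riemann sum over the uniformly spaced $\psi_n$. These are two faces of the same fact --- the error of the equispaced quadrature rule on a periodic integrand is exactly the sum of the aliased Fourier coefficients $J_{Nt}(\bar{\zeta})$, $t\neq 0$, that the paper discards --- so your approach is more elementary and arguably more transparent, but it leaves the discretization error qualitative (``negligible at practical $N$'') where the paper's series route makes it quantitative: one needs $N$ large relative to $\bar{\zeta}$, i.e.\ the integrand must not oscillate faster than the angular sampling, which half-wavelength spacing guarantees for users beyond the Fresnel distance. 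Your change-of-variable bookkeeping (collapsing the length-$4\pi$ interval onto $2\pi$ by periodicity with no extra prefactor), the sign absorption via $J_0(-x)=J_0(x)$, and the observation that the first-order Taylor terms cancel exactly because $\phi_1=\phi_2$ are all handled correctly.
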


\begin{proof}
The proof is provided in {\bf Appendix~\ref{app: lemma2}}.
\end{proof}
\par This lemma characterizes the focusing property of near-field beam focusing vectors in the distance domain. Unlike the far-field beamforming which is uniform with different transmission distances~\cite{Kallnichev'01}, the beamforming gain in the near-field region varies with the distance. For fixed $r_1$, the limit of the beamforming gain when $r_2$ approaches $0$ or $\infty$ could be expressed through the following corollary.
\begin{corollary}[Limit of Beamforming Gain]
\label{coro1}
For fixed $r_1$, the limit of beamforming gain when $r_2$ tends to 0 is expressed as
\begin{equation}
\label{eq: limit with small r}
\begin{aligned}
\lim_{r_2 \to 0+} g(r_1, \phi, r_2, \phi) = 0.
\end{aligned}
\end{equation}
Similarly, the limit of beamforming gain when $r_2$ tends to infinity can be written as
\begin{equation}
\label{eq: limit with large r}
\begin{aligned}
\lim_{r_2 \to \infty} g(r_1, \phi, r_2, \phi) = \left|J_0\left(\frac{\pi R^2}{2\lambda r_1}\right)\right|.
\end{aligned}
\end{equation}
\end{corollary}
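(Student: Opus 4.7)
The plan is to reduce the corollary directly to the closed-form expression $g(r_1,\phi,r_2,\phi)\approx |J_0(\zeta)|$ established in Lemma~\ref{lemma2}, together with $\zeta=\tfrac{2\pi R^2}{\lambda}\bigl|\tfrac{1}{4r_1}-\tfrac{1}{4r_2}\bigr|$. With $r_1$ fixed and $\phi$ fixed, the only quantity varying with $r_2$ is the scalar $\zeta$, so both limits become one-variable limits of $|J_0(\zeta(r_2))|$. This moves the whole argument into two straightforward steps: evaluate $\lim \zeta$ in each regime, then pass the limit through the Bessel function.

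For the large-distance case $r_2\to\infty$, I would simply observe that $\tfrac{1}{4r_2}\to 0$, so $\zeta\to\tfrac{2\pi R^2}{\lambda}\cdot\tfrac{1}{4r_1}=\tfrac{\pi R^2}{2\lambda r_1}$. Because $J_0$ is continuous on $[0,\infty)$, the limit of $|J_0(\zeta)|$ is obtained by substitution, yielding exactly $\bigl|J_0\bigl(\tfrac{\pi R^2}{2\lambda r_1}\bigr)\bigr|$ as claimed in \eqref{eq: limit with large r}. This case is essentially a sanity check that the near-field formula recovers a finite far-field-like value.

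For the near-origin case $r_2\to 0^+$, the term $\tfrac{1}{4r_2}$ blows up, so $\zeta\to+\infty$. The key fact needed is the classical asymptotic expansion
\begin{equation*}
J_0(x)=\sqrt{\tfrac{2}{\pi x}}\cos\!\bigl(x-\tfrac{\pi}{4}\bigr)+\mathcal{O}(x^{-3/2}),\qquad x\to\infty,
\end{equation*}
which implies $|J_0(x)|\le \sqrt{\tfrac{2}{\pi x}}+\mathcal{O}(x^{-3/2})\to 0$. Combining this with the monotone divergence $\zeta(r_2)\to\infty$ gives $|J_0(\zeta)|\to 0$, establishing \eqref{eq: limit with small r}.

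The only mildly delicate point is that the identity $g\approx |J_0(\zeta)|$ in Lemma~\ref{lemma2} was derived under the second-order Fresnel approximation, which becomes questionable when $r_2$ is very small (below the Fresnel distance). I would note explicitly that the corollary is to be read as a statement about the analytic expression \eqref{eq: gain near distance 2}, describing the predicted beamforming behavior within the regime where the spherical-wave Taylor expansion is valid; the $r_2\to 0^+$ limit is then a limit taken within that predicted envelope, not a claim about the physical near-zero-distance channel. Apart from that caveat, no serious obstacle arises: the corollary is essentially two standard limits of a Bessel function composed with a rational function of $r_2$.
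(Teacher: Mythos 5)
Your proposal is correct and follows essentially the same route as the paper: substitute the Lemma~2 expression $|J_0(\zeta)|$, note that $\zeta\to\infty$ as $r_2\to 0^+$ (so the gain vanishes by $\lim_{x\to\infty}J_0(x)=0$) and $\zeta\to\frac{\pi R^2}{2\lambda r_1}$ as $r_2\to\infty$. Your caveat about the validity of the second-order approximation for small $r_2$ matches the paper's own footnote, and your explicit use of the Bessel asymptotic expansion is just a slightly more detailed justification of the same fact the paper quotes.
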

\begin{proof}
According to~\eqref{eq: gain near distance 2}, $\zeta$ tends to infinity when positive $r_2$ tends to $0$. With $\lim_{x \to \infty} J_0(x) = 0$, equation~\eqref{eq: limit with small r} could be easily obtained\footnote{The limit derived in {\bf{Corollary}~\ref{coro1}} is built on a basis of the approximation in~\eqref{eq: distance near}. The approximation in~\eqref{eq: distance near} may not be accurate anymore for small $r_2$. Nevertheless, this limit analysis illustrates the variation trends for different $r_2$, indicating the necessity of accurate beamforming.}. Similarly, when $r_2$ tends to infinity, $\zeta = \frac{\pi R^2}{2\lambda r_1}$ could be obtained, which completes the proof of~\eqref{eq: limit with large r}.
\end{proof}

\par The corollary describes the limit of beamforming gain when the user is far away from the focal point of near-field beams. Generally speaking, according to the overall descending trend of $J_0(\beta)$, larger distance between the user and the focal point of beam will result in smaller beamforming gain, indicating the necessity of accurate beamforming in the distance domain.

\par In addition, we are also curious about the envelope of the beamforming gain, which indicates the variation trends of the beamforming without fluctuations.
\begin{corollary}[Upper Bound of Beamforming Gain]
\label{coro2}
The upper bound of beamforming gain in~\eqref{eq: gain near distance 3} for large $\zeta$ can be expressed 
\begin{equation}
\label{eq: asymp beam gain}
\begin{aligned}
g(r_1,\phi,r_2,\phi) \leq \frac{2}{\pi R}\sqrt{\frac{\lambda}{|1/r_1-1/r_2|}} = \frac{2}{\pi R}\sqrt{\frac{\lambda r_1 r_2}{|r_2-r_1|}}.
\end{aligned}
\end{equation}

\end{corollary}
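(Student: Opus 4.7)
The plan is to reduce the claim to a standard asymptotic bound on the zero-order Bessel function of the first kind, since \textbf{Lemma~\ref{lemma2}} already identifies $g(r_1,\phi,r_2,\phi) \approx |J_0(\zeta)|$ with $\zeta = \frac{2\pi R^2}{\lambda}\left|\frac{1}{4r_1}-\frac{1}{4r_2}\right| = \frac{\pi R^2}{2\lambda}\left|\frac{1}{r_1}-\frac{1}{r_2}\right|$. It therefore suffices to show $|J_0(\zeta)| \le \sqrt{2/(\pi\zeta)}$ for large $\zeta$ and then substitute and simplify.

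First, I would invoke the textbook large-argument asymptotic
\begin{equation}
J_0(x) = \sqrt{\frac{2}{\pi x}} \cos\!\left(x-\frac{\pi}{4}\right) + O(x^{-3/2}), \qquad x\to\infty,
\end{equation}
which follows from applying the method of stationary phase to the integral representation $J_0(x)=\frac{1}{\pi}\int_0^\pi \cos(x\sin\theta)\, d\theta$. Bounding $|\cos(\cdot)|\le 1$ and absorbing the subleading $O(x^{-3/2})$ correction into the large-$\zeta$ hypothesis yields the envelope bound $|J_0(\zeta)| \le \sqrt{2/(\pi\zeta)}$.

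Second, I would substitute the explicit expression for $\zeta$ into this envelope bound to obtain
\begin{equation}
|J_0(\zeta)| \le \sqrt{\frac{2}{\pi\zeta}} = \sqrt{\frac{4\lambda}{\pi^2 R^2 \left|1/r_1-1/r_2\right|}} = \frac{2}{\pi R}\sqrt{\frac{\lambda}{\left|1/r_1-1/r_2\right|}},
\end{equation}
and the alternative form in~\eqref{eq: asymp beam gain} follows by clearing denominators via $|1/r_1-1/r_2| = |r_2-r_1|/(r_1 r_2)$.

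The main obstacle is that $\sqrt{2/(\pi\zeta)}$ is, a priori, only the amplitude of an oscillatory asymptotic rather than a pointwise upper bound: the quantity $\sqrt{\pi\zeta/2}\,|J_0(\zeta)|$ oscillates around $|\cos(\zeta-\pi/4)|$ with a small positive correction. Resolving this cleanly requires either citing the known inequality $|J_0(x)|\le \sqrt{2/(\pi x)}$ valid beyond the first positive zero of $J_0$ (which follows from monotonicity properties of the envelope established through the differential equation $x J_0''+J_0'+xJ_0=0$), or, more simply, interpreting the claim as an asymptotic envelope estimate consistent with the ``large $\zeta$'' qualifier in the statement. Either route produces~\eqref{eq: asymp beam gain} with no further calculation.
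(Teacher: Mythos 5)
Your proposal matches the paper's proof exactly: both invoke the large-argument asymptotic $J_0(x)=\sqrt{2/(\pi x)}\cos(x-\pi/4)+O(x^{-3/2})$, bound the cosine by one to get $|J_0(\zeta)|\le\sqrt{2/(\pi\zeta)}$ for large $\zeta$, and substitute $\zeta=\frac{\pi R^2}{2\lambda}\left|1/r_1-1/r_2\right|$ from Lemma~\ref{lemma2}. Your added remark that the envelope is a priori only an asymptotic amplitude rather than a pointwise bound is a fair point of rigor that the paper itself glosses over, but it does not change the argument.
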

\begin{proof}
According to the asymptotic property of $J_0(\cdot)$ that $J_0(x)=\sqrt{\frac{2}{\pi x}} \cos\Big(x-\frac{\pi}{4} + O\left(x^{-3/2}\Big)\right)$, $\left|J_0(\zeta)\right| \leq \sqrt{\frac{2}{\pi \zeta}}$ could be ensured for large $\zeta$, completing the proof.
\end{proof}
\par This corollary characterizes the upper bound as well as the convergence speed of beamforming gain in the distance domain. It is shown that beamforming gain decreases as $R$ or distance differences $\left|\frac{1}{r_1}-\frac{1}{r_2}\right|$ scale up. For fixed $r_1$ and $r_2$, the convergence speed of the beamforming gain is of the rate $\frac{1}{R}$. This asymptotic analysis will be verified in Section~\ref{sec: sim}.

\par Apart from the limit analysis, we are also concerned about the property of beamforming gain around the focal point. For far-field beamforming, the concept of beamwidth is used to describe the main lobe range of beams in the angular domain. Similarly, the $3~\rm{dB}$ \emph{depth-of-focus} of the near-field beam could also be defined to capture the main lobe of the beam in the distance domain with the following corollary as in~\cite{emil'21}.

\begin{corollary}[Depth-of-Focus of UCA]
\label{coro3}
The $3~\rm{dB}$ range of the near-field beamforming gain in the distance domain with UCA can be expressed by
\begin{equation}
\label{eq: beam depth}
{\rm{BD}_{3dB}} =\left\{
\begin{aligned}
\frac{4\pi \eta \lambda R^2r_0^2}{\pi^2 R^4-4\eta^2 \lambda^2 r_0^2}, \quad r_0 &< \frac{\pi R^2}{2\eta \lambda}\\
\infty~~~~~~~~, \quad r_0 &\geq \frac{\pi R^2}{2\eta \lambda},\\
\end{aligned}
\right
.
\end{equation}
where $\eta$ represents the $3~\rm{dB}$ threshold defined as $J_0(\eta) = 1/2$ and $\eta = 1.521$ is numerically calculated from Bessel functions.
\end{corollary}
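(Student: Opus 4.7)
The plan is to start from the closed-form beamforming gain in Lemma~\ref{lemma2} and translate the $3~\rm{dB}$ threshold condition into an inequality on the reciprocal distances $1/r_1$ and $1/r_2$, which is the natural variable since $\zeta$ is linear in $|1/r_1-1/r_2|$. Concretely, I would fix the focal point at $r_1=r_0$ and look for all $r_2>0$ such that $|J_0(\zeta)| \geq 1/2$. By the definition of $\eta$ as the smallest positive solution of $J_0(\eta)=1/2$ (and since $J_0$ is monotonically decreasing on $[0,\eta]$ with $J_0(0)=1$), this is equivalent to requiring $\zeta \leq \eta$, i.e.
\begin{equation*}
\frac{\pi R^2}{2\lambda}\left|\frac{1}{r_0}-\frac{1}{r_2}\right| \leq \eta.
\end{equation*}

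Denoting $\delta := \frac{2\eta\lambda}{\pi R^2}$, this bound rearranges to $\frac{1}{r_0}-\delta \leq \frac{1}{r_2} \leq \frac{1}{r_0}+\delta$. The upper bound on $1/r_2$ always yields a positive lower bound $r_2 \geq r_0/(1+\delta r_0)$. The lower bound on $1/r_2$, however, splits into two cases according to the sign of $1/r_0-\delta$. If $1/r_0 > \delta$, equivalently $r_0 < 1/\delta = \frac{\pi R^2}{2\eta\lambda}$, then $r_2$ is restricted to the finite interval $[r_0/(1+\delta r_0),\,r_0/(1-\delta r_0)]$, and the depth-of-focus is obtained by a direct subtraction:
\begin{equation*}
{\rm BD}_{3dB} = \frac{r_0}{1-\delta r_0}-\frac{r_0}{1+\delta r_0} = \frac{2\delta r_0^2}{1-\delta^2 r_0^2}.
\end{equation*}
Substituting $\delta=\frac{2\eta\lambda}{\pi R^2}$ and simplifying gives exactly $\frac{4\pi\eta\lambda R^2 r_0^2}{\pi^2 R^4 - 4\eta^2\lambda^2 r_0^2}$, matching the first branch of~\eqref{eq: beam depth}. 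If instead $1/r_0 \leq \delta$, i.e.\ $r_0 \geq \frac{\pi R^2}{2\eta\lambda}$, then the lower bound on $1/r_2$ becomes non-positive, so it imposes no constraint on positive $r_2$; the admissible set extends to infinity, giving the second branch ${\rm BD}_{3dB}=\infty$.

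I do not expect a serious obstacle here, since Lemma~\ref{lemma2} already reduces the problem to a one-variable Bessel inequality. The one subtlety that needs a careful sentence is justifying that the relevant region for $\zeta$ is $[0,\eta]$, which requires noting that $J_0$ is strictly positive and decreasing on that interval (so that $|J_0(\zeta)|\geq 1/2$ is indeed equivalent to $\zeta \leq \eta$ for the main lobe around the focal point); in principle there can be secondary ranges of $r_2$ where $|J_0(\zeta)|\geq 1/2$ again via side lobes, but these correspond to non-contiguous regions that are not counted in the conventional depth-of-focus definition. The only other care is algebraic: converting bounds on $1/r_2$ into bounds on $r_2$ correctly and identifying the critical value $r_0 = \frac{\pi R^2}{2\eta\lambda}$ at which the upper endpoint blows up, yielding the piecewise form of the corollary.
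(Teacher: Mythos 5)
Your proposal is correct and follows essentially the same route as the paper: both reduce the $3~\rm{dB}$ condition via Lemma~\ref{lemma2} to the reciprocal-distance constraint $\left|\frac{1}{4r_0}-\frac{1}{4r}\right| \leq \frac{\lambda\eta}{2\pi R^2}$, solve for the two edges $r = \frac{\pi R^2 r_0}{\pi R^2 \mp 2\eta\lambda r_0}$, subtract, and note that the far edge disappears (giving ${\rm BD}_{3dB}=\infty$) precisely when $r_0 \geq \frac{\pi R^2}{2\eta\lambda}$. Your explicit handling of the monotonicity of $J_0$ on $[0,\eta]$ and the side-lobe caveat matches the closing remark in the paper's own proof.
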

\begin{proof}
We assume that the focal point of near-field beamforming is $(r_0, \phi)$. According to {\bf{Lemma~\ref{lemma2}}}, the $3~\rm{dB}$ edge of the beam needs to satisfy $|J_0(\xi_{\rm{edge}})| = 1/2$ where $\xi_{\rm{edge}} = \frac{2\pi R^2}{\lambda}\left|\frac{1}{4r_0}-\frac{1}{4r}\right|$. With the assumption $J_0(\eta) = 1/2$, we can obtain
\begin{equation}
\label{eq: app3 1}
\begin{aligned}
\left|\frac{1}{4r_0}-\frac{1}{4r}\right| = \frac{\lambda \eta}{2\pi R^2}.
\end{aligned}
\end{equation}
With the constraint $r>0$, we can only have one-side edge when $r_0 \geq \frac{\pi R^2}{2\eta \lambda}$, which results in an infinite depth-of-focus. Otherwise, the depth-of-focus can be expressed as $r_{\rm{max}}-r_{\rm{min}}$, where $r_{\rm{max}}$ and $r_{\rm{min}}$ are the two solutions of~\eqref{eq: app3 1} as
\begin{equation}
\label{eq: app3 2}
\begin{aligned}
r = \frac{\pi R^2 r_0}{\pi R^2 \mp 2\eta \lambda r_0}.
\end{aligned}
\end{equation}
\par In addition, owing to the property of the zero-order Bessel function, there only exists one solution satisfying $J_0(\eta) = 1/2$ and $J_0(\cdot)$ is monotonically decreasing in the range $[0, \eta]$, which ensures that the beamforming gain always exceeds $1/2$ in the whole $3~\rm{dB}$ range, which completes the proof.
\end{proof}

\begin{corollary}[Asymptotic Property of Depth-of-Focus]
\label{coro4}
If the half-wavelength spacing is adopted for UCA, the depth-of-focus of UCA tends to zero when the number of antennas tends to infinity, which is to say
\begin{equation}
\label{eq: asymptotic}
\begin{aligned}
\lim_{N \to \infty} {\rm{BD}_{3dB}} = 0.
\end{aligned}
\end{equation}
\end{corollary}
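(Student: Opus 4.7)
The plan is to reduce the asymptotic statement to a direct substitution into the closed-form depth-of-focus derived in Corollary~\ref{coro3}. First I would translate the half-wavelength spacing condition into an explicit relationship between the radius $R$ and the number of antennas $N$. Since the $N$ antennas are uniformly distributed on the circle of radius $R$, the arc length between adjacent elements is $2\pi R/N$, and the half-wavelength condition $2\pi R/N = \lambda/2$ yields $R = N\lambda/(4\pi)$. This is the key identity: as $N$ grows, $R$ grows linearly with $N$.

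Next I would verify that for any fixed focal distance $r_0$, the upper branch of the piecewise formula eventually applies. Indeed, the threshold $\pi R^2/(2\eta\lambda) = \pi N^2 \lambda/(32\pi^2\eta)$ grows like $N^2$, so for $N$ sufficiently large we have $r_0 < \pi R^2/(2\eta\lambda)$, and the relevant expression is
\begin{equation*}
\text{BD}_{3\text{dB}} = \frac{4\pi\eta\lambda R^2 r_0^2}{\pi^2 R^4 - 4\eta^2 \lambda^2 r_0^2}.
\end{equation*}

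Then I would substitute $R^2 = N^2\lambda^2/(16\pi^2)$ and $R^4 = N^4\lambda^4/(256\pi^4)$ into this expression. The numerator becomes proportional to $N^2$, while the leading term of the denominator is proportional to $N^4$ and the subtracted term $4\eta^2\lambda^2 r_0^2$ is constant in $N$. Thus the ratio scales like $1/N^2$ and therefore vanishes as $N \to \infty$, which establishes~\eqref{eq: asymptotic}.

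No real obstacle is anticipated here; the result follows from a routine limit once the half-wavelength condition is used to tie $R$ to $N$. The only conceptual point worth emphasizing in the write-up is that the case distinction in Corollary~\ref{coro3} collapses to the finite branch for every fixed $r_0$ in the limit, so the infinite depth-of-focus regime does not interfere with the asymptotic statement.
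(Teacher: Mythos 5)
Your proposal is correct and follows essentially the same route as the paper: use half-wavelength spacing to conclude $R\to\infty$ with $N$ (you make this explicit via $R=N\lambda/(4\pi)$), observe that the finite branch $r_0<\pi R^2/(2\eta\lambda)$ of Corollary~\ref{coro3} eventually applies for any fixed $r_0$, and then note the expression behaves like $1/R^2$ (equivalently $1/N^2$) and vanishes. The paper reaches the same conclusion by dividing numerator and denominator by $R^2$ rather than substituting $N$ explicitly, but the argument is identical in substance.
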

\begin{proof}
With the half-wavelength spacing assumption, the UCA radius $R$ tends to infinity as the number of antennas tends to infinity. Then, the condition $r_0 < \frac{\pi R^2}{2\eta \lambda}$ always holds for arbitrarily fixed $r_0$. Therefore, the $3~\rm{dB}$ depth-of-focus could be rewritten as $\frac{4\pi \eta \lambda r_0^2}{\pi^2 R^2-4\eta^2 \lambda^2 r_0^2/R^2}$, indicating that the $3~\rm{dB}$ depth-of-focus converges to zero when $R \to \infty$, which completes the proof.
\end{proof}
\begin{remark}
{\bf{Corollary~\ref{coro3}}} and {\bf{Corollary~\ref{coro4}}} characterize the asymptotic focusing property of a large circular array. According to equation~\eqref{eq: beam depth}, the depth-of-focus monotonically decreases as the radius $R$ scales up, indicating a stronger focusing ability in the distance domain could be obtained if the array aperture is enlarged. The enhanced focusing ability of UCA could be leveraged to mitigate the power leakage in undesired locations, enhancing the spectrum efficiency especially in LoS dominating scenarios, which will be discussed in the next subsection.
\end{remark}

Another important issue of UCA beamforming is to determine the region where the spherical-wave propagation model needs to be adopted. Rayleigh distance (or Fraunhofer boundary) has been widely adopted to partition the far-field and near-field regions, which is defined with the criterion of phase discrepancy lower than $\pi/8$ when planar-wave approximations are adopted~\cite{Sherman'62'j}. Nevertheless, it was proved that the widely used Rayleigh distance is not accurate enough to characterize the beamforming performance, which is more critical for communications~\cite{cui'21}. To solve this problem, effective Rayleigh distance (ERD) is introduced to mark the region where beamforming loss adopting classical far-field steering vectors is larger than the threshold, which is a more accurate near-field region for communications~\cite{Cui'22'tcom}. According to the definition, the ERD can be formulated as
\begin{equation}
\label{eq: erd}
\begin{aligned}
r_{\rm{ERD}}(\phi) \mathop{=}\limits^{\Delta} {\underset{r}{{\arg\max}} \, \left\{ 1- |{\bf{b}}^H(r,\phi){\bf{a}}(\phi)| \geq \delta \right\} },
\end{aligned}
\end{equation}
where $\delta$ is a predefined beamforming loss threshold and $1 - |{\bf{b}}^H(r,\phi){\bf{a}}(\phi)|$ characterizes the loss of beamforming gain with far-field beamforming. Then, the ERD for UCA can be provided through the following lemma.
\begin{lemma}
\label{lemma3}
The ERD defined in~\eqref{eq: erd} for UCA can be expressed as
\begin{equation}
\label{eq: erd uca}
\begin{aligned}
r_{\rm{ERD}}^{({\rm{C}})}(\phi) = \frac{\pi R^2}{2\lambda J_0^{-1}(1-\delta)} = \epsilon_{\rm{C}} \frac{2D_{\rm{C}}^2}{\lambda},
\end{aligned}
\end{equation}
where the notation $J_0^{-1}(\cdot)$ is defined as the inverse function of Bessel function in the main lobe, which is to say $J_0^{-1}(x) = \left\{ {\underset{y_0}{{\arg\min}} \, |J_0(y_0)| = x} \right\}$ for $x \in [0,1]$. The superscript $({\rm{C}})$ denotes UCA and $\epsilon_{\rm{C}} = \frac{\pi}{16 J_0^{-1}(1-\delta)}$. 
\end{lemma}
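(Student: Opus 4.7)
The plan is to compute $|\mathbf{b}^H(r,\phi)\mathbf{a}(\phi)|$ in closed form, substitute it into the defining condition of the ERD, and then solve for $r$ using the inverse Bessel function.

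First, I would observe that the far-field steering vector $\mathbf{a}(\phi)$ is the limit of the near-field focusing vector $\mathbf{b}(r_2,\phi)$ as $r_2 \to \infty$, since taking $r_2 \to \infty$ kills the second-order term $\frac{R^2}{2r_2}(1-\cos^2(\phi-\psi_n))$ in~\eqref{eq: distance near} and leaves only the first-order planar-wave phase used in~\eqref{eq: far beam vector}. Consequently,
\begin{equation*}
|\mathbf{b}^H(r,\phi)\mathbf{a}(\phi)| = \lim_{r_2 \to \infty} |\mathbf{b}^H(r,\phi)\mathbf{b}(r_2,\phi)| = \lim_{r_2 \to \infty} g(r,\phi,r_2,\phi).
\end{equation*}
Applying Lemma~\ref{lemma2} together with the limit in~\eqref{eq: limit with large r} of Corollary~\ref{coro1}, this yields the approximation
\begin{equation*}
|\mathbf{b}^H(r,\phi)\mathbf{a}(\phi)| \approx \left| J_0\!\left(\tfrac{\pi R^2}{2\lambda r}\right) \right|.
\end{equation*}

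Next, I would substitute this expression into the defining inequality $1-|\mathbf{b}^H(r,\phi)\mathbf{a}(\phi)| \geq \delta$, giving $|J_0(\pi R^2/(2\lambda r))| \leq 1-\delta$. The ERD is the largest $r$ for which this holds, so I look for the boundary $|J_0(\pi R^2/(2\lambda r))| = 1-\delta$. As $r$ increases from small values to $\infty$, the argument $\pi R^2/(2\lambda r)$ decreases monotonically from large values down to $0$, and $J_0$ monotonically increases from its first positive value within the main lobe up to $J_0(0)=1$. Therefore the largest $r$ at which the boundary is reached corresponds to the main-lobe solution, i.e., to $\pi R^2/(2\lambda r) = J_0^{-1}(1-\delta)$ under the paper's convention that $J_0^{-1}$ picks the smallest positive argument. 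Solving for $r$ immediately gives $r_{\rm ERD}^{(\rm C)}(\phi) = \pi R^2/(2\lambda J_0^{-1}(1-\delta))$.

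Finally, I would verify the second equality by recalling that the aperture of a UCA is its diameter $D_{\rm C} = 2R$, so $2D_{\rm C}^2/\lambda = 8R^2/\lambda$, and then $\epsilon_{\rm C}\cdot 2D_{\rm C}^2/\lambda = \frac{\pi}{16 J_0^{-1}(1-\delta)}\cdot\frac{8R^2}{\lambda} = \frac{\pi R^2}{2\lambda J_0^{-1}(1-\delta)}$, which matches the first expression. The main subtlety is justifying the restriction to the main lobe of $J_0$: because $J_0$ oscillates, the equation $|J_0(\cdot)|=1-\delta$ has infinitely many solutions and the expression $J_0^{-1}(\cdot)$ is ambiguous in general. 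This is resolved by noting that the argmax over $r$ in~\eqref{eq: erd} selects the largest admissible $r$, which corresponds to the smallest admissible Bessel argument, which lies in the main lobe and is consistent with the paper's definition of $J_0^{-1}$; all other solutions correspond to smaller $r$ inside the near-field region and are therefore dominated by this boundary value. No independent angle-dependence appears because the UCA beamforming gain derived in Lemma~\ref{lemma2} is rotationally invariant.
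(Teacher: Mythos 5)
Your proposal is correct and follows essentially the same route as the paper: identify $\mathbf{a}(\phi)$ with $\mathbf{b}(\infty,\phi)$, invoke Lemma~\ref{lemma2} to get $|J_0(\pi R^2/(2\lambda r))|$, and solve the threshold equation at the main-lobe crossing. Your explicit justification for restricting to the main lobe (largest admissible $r$ corresponds to the smallest Bessel argument) is a welcome clarification of a step the paper leaves implicit.
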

\begin{proof}
According to {\bf {Lemma}~\ref{lemma2}}, the beamforming gain adopting the far-field beam steering vector ${\bf{a}}(\phi)$ at the location $(r, \phi)$ can be formulated as
\begin{equation}
\label{eq: lemma 3 eq1}
\begin{aligned}
g = \left|{\bf{b}}^H(r, \phi){\bf{a}}(\phi) \right| = \left|{\bf{b}}^H(r, \phi){\bf{b}}(\infty, \phi) \right| = \left|J_0 \left(\frac{\pi R^2}{2\lambda r} \right) \right|.
\end{aligned}
\end{equation}
According to the definition of $J_0^{-1}(\cdot)$, $1 - g \leq \delta$ always holds for $\frac{\pi R^2}{2\lambda r} \leq J_0^{-1}(1-\delta)$. Therefore, $r_{\rm{ERD}}^{({\rm{C}})} = \frac{\pi R^2}{2\lambda J_0^{-1}(1-\delta)}$ can be obtained. Besides, it can be seen from~\eqref{eq: lemma 3 eq1} that the ERD is invariant for different angles, therefore $\phi$ could be omitted. This completes the proof.
\end{proof}
\par The ERD of UCA is invariant for different spatial angles, which is different from the ULA scenario where the ERD significantly reduces for large spatial directions~\cite{Cui'22'tcom}. More comparisons between UCA and ULA are provided in the following subsection.
\vspace{-1em}
\subsection{Comparison with the Uniform Linear Array}\label{sec: discussion}
\par We assume that both ULA and UCA share the same array aperture\footnote{In practical implementation, the BS usually has a restrict constraint on the largest array aperture, instead of the number of antennas equipped.}, that is to say, $D_{\rm{L}} = D_{\rm{C}} = 2R$. According to the results in~\cite{Cui'22'tcom}, the ERD for ULA can be expressed as
\begin{equation}
\label{eq: erd ULA}
\begin{aligned}
r_{\rm{ERD}}^{({\rm{L}})}(\phi) = \epsilon_{\rm{L}} \frac{2D_{\rm{L}}^2 \cos^2\phi}{\lambda},
\end{aligned}
\end{equation}
which is related to the direction $\phi$. The threshold $\epsilon_{\rm{L}}$ is determined according to the predefined beamforming loss of $\delta$. For instance, $\epsilon_{\rm{L}} = 0.367$ can be ensured for $\delta = 0.05$. In addition, it can be shown that ERD significantly reduces as the direction of user $\phi$ increases. The reason is that the effective array aperture $D_{\rm{L}}\cos\phi$ reduces for large angles, resulting in smaller phase differences between far-field planar-wave and near-field spherical-wave models. On the contrary, due to the rotational symmetry of UCA, the effective array aperture remains the same in different directions, which results in invariant ERD as in~\eqref{eq: erd uca}. This feature indicates the possibility of UCA to provide uniform and enlarged near-field regions compared with ULA. To compare the near-field regions, the ratio of ERD is defined as
\begin{equation}
\label{eq: ratio erd}
\begin{aligned}
\rho(\phi) = \frac{r_{\rm{ERD}}^{({\rm{L}})}(\phi)}{r_{\rm{ERD}}^{({\rm{C}})}(\phi)} = \frac{\epsilon_{\rm{L}} \cos^2\phi}{\epsilon_{\rm{C}}}.
\end{aligned}
\end{equation}
\par The shape of the ERD as well as the ratio with threshold $\delta = 0.05$ is plotted in Fig.~\ref{img: UCA ULA}, where solid lines denote the ERD calculated according to the definition in~\eqref{eq: erd} and dashed lines denote the estimated ERD obtained from~\eqref{eq: erd uca} and~\eqref{eq: erd ULA}. It can be seen that ERD of UCA exceeds that of ULA over all angles with the same array aperture. Therefore, it shows that the near-field region measured by ERD could be enlarged with UCA. Moreover, the estimations of ERD in equations (21) and (23) are consistent with the accurate calculations, showing high estimation accuracy. 
\begin{figure}[!t]
    \centering
    \subfigure[Edge of ERD]{
    	\includegraphics[width=1.48in]{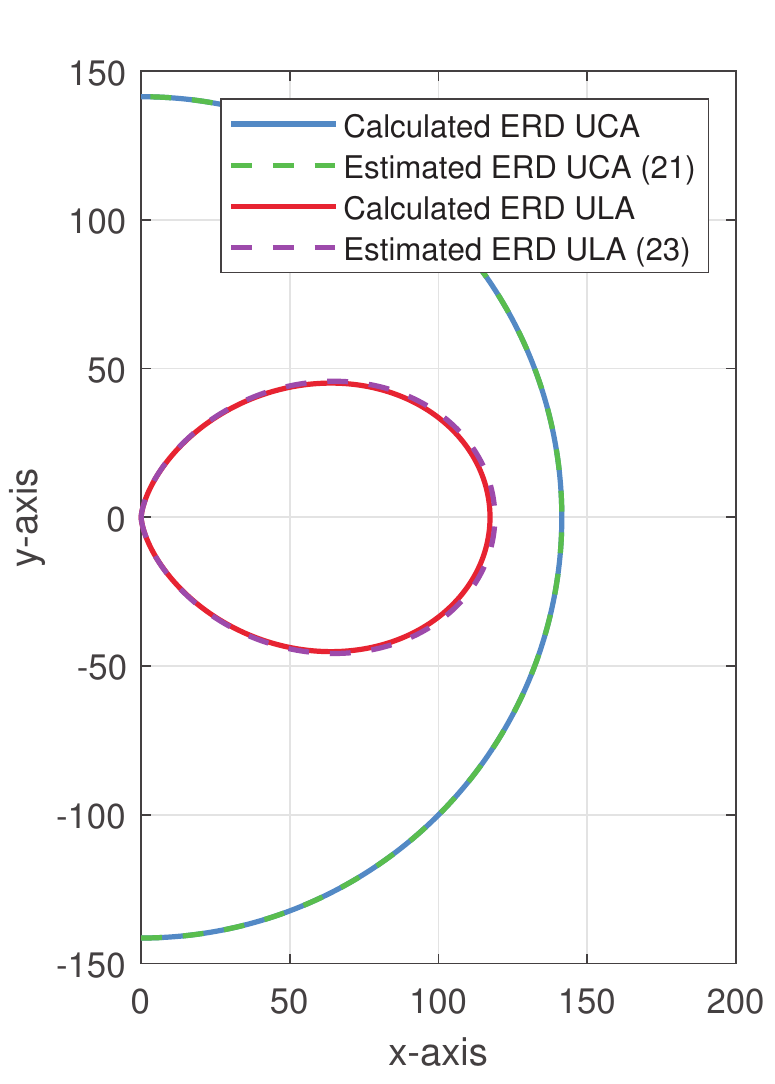}}
    \subfigure[Ratio of ERD]{
    	\includegraphics[width=1.5in]{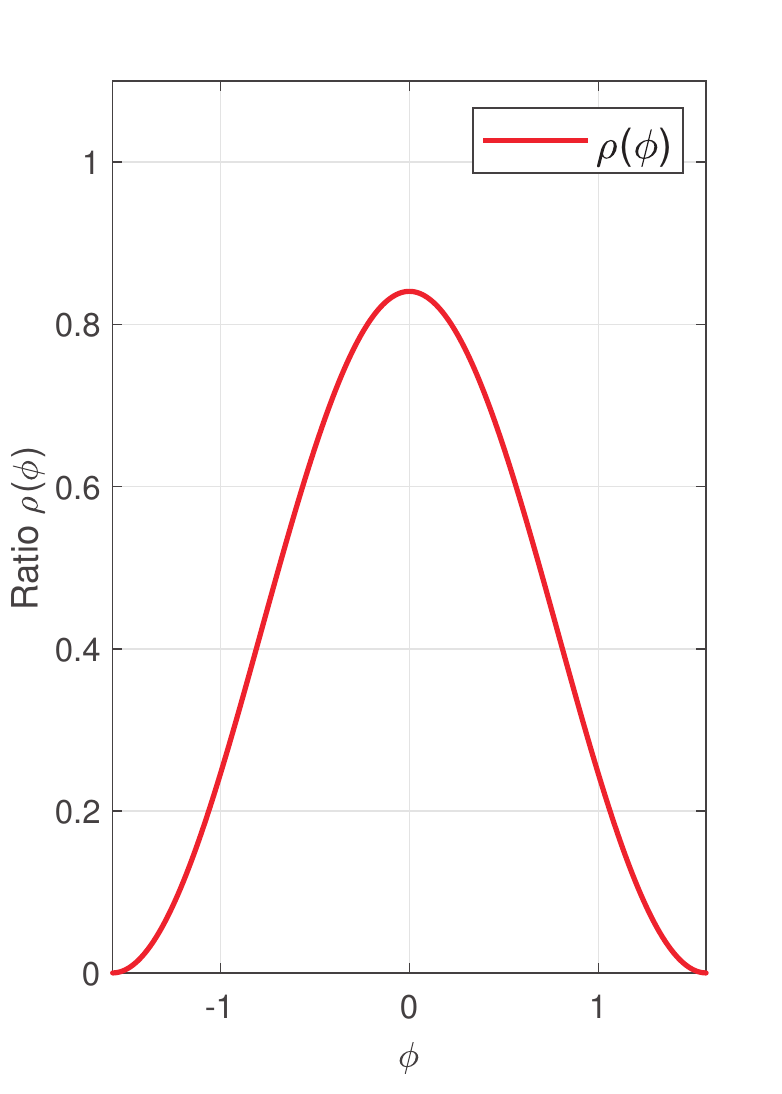}}
    \caption{Comparison of ERD for UCA and ULA over different angles. The center of UCA and ULA are both located at the origin. The ULA is placed along the $y$-axis.}
    \label{img: UCA ULA}
\end{figure}
\par In addition, the beamforming gain in the distance domain for ULA and UCA has different features. In massive MIMO systems for 5G communication, multiple users located in different angles could be served simultaneously with SDMA to significantly enhance spectrum efficiency. The classical SDMA scheme is realized by the angular focusing ability, i.e. focusing the signal energy in desired angles to increase received signal power and mitigate undesired interferences. Unlike far-field communications which are only capable of focusing signal energy on specific directions, the signal energy could be concentrated on specific directions and distances, i.e. locations, in the near field. The focusing property of near-field beams could be employed to enable transmissions with users not only in different angles but also in different distances~\cite{Wu'22'jsac}.
\par To show that compared with ULA, UCA possesses stronger focusing ability in distance domain, the comparison of beamforming gain for ULA and UCA is shown in Fig.~\ref{img: UCA ULA gain}. The beamforming gain is derived by employing far-field beam steering vectors in near field with $g = |{\bf{b}}^H(r,\phi){\bf{a}}(\phi)|$. The array aperture is set to $1.27\, {\rm{m}}$ with half-wavelength spacing for UCA and ULA working at 30 GHz. It can be seen that, compared with the smooth decreasing trend of beamforming gain with ULA, the beamforming gain decreases faster with UCA, indicating that UCA is capable of focusing signal power in smaller range and mitigating power leakage. Note that the results are consistent with the conclusion in Fig.~\ref{img: UCA ULA} that a larger near-field region could be obtained with UCA. 
\par Moreover, owing to the property of Bessel functions, there exist several zeros of beamforming gain, indicating that zero beamforming gain can be obtained even when the beams are aligned along the right direction. This phenomenon indicates a possibility for further spectrum efficiency enhancement in multi-user MIMO systems. Specifically, if ULA is adopted and two users are randomly aligned in the same direction, near-field beams focusing on either of the users inevitably generates interferences at the other user. However, if UCA is employed, one user can be located around the zero point of the beam which focuses the energy on the location of the other user. In such cases, interference-free transmissions could be obtained, which paves a way to utilize spatial resources in the distance domain to enhance spectrum efficiency.

\begin{figure}[!t]
    \centering
    \setlength{\abovecaptionskip}{0.cm}
    \includegraphics[width=3in]{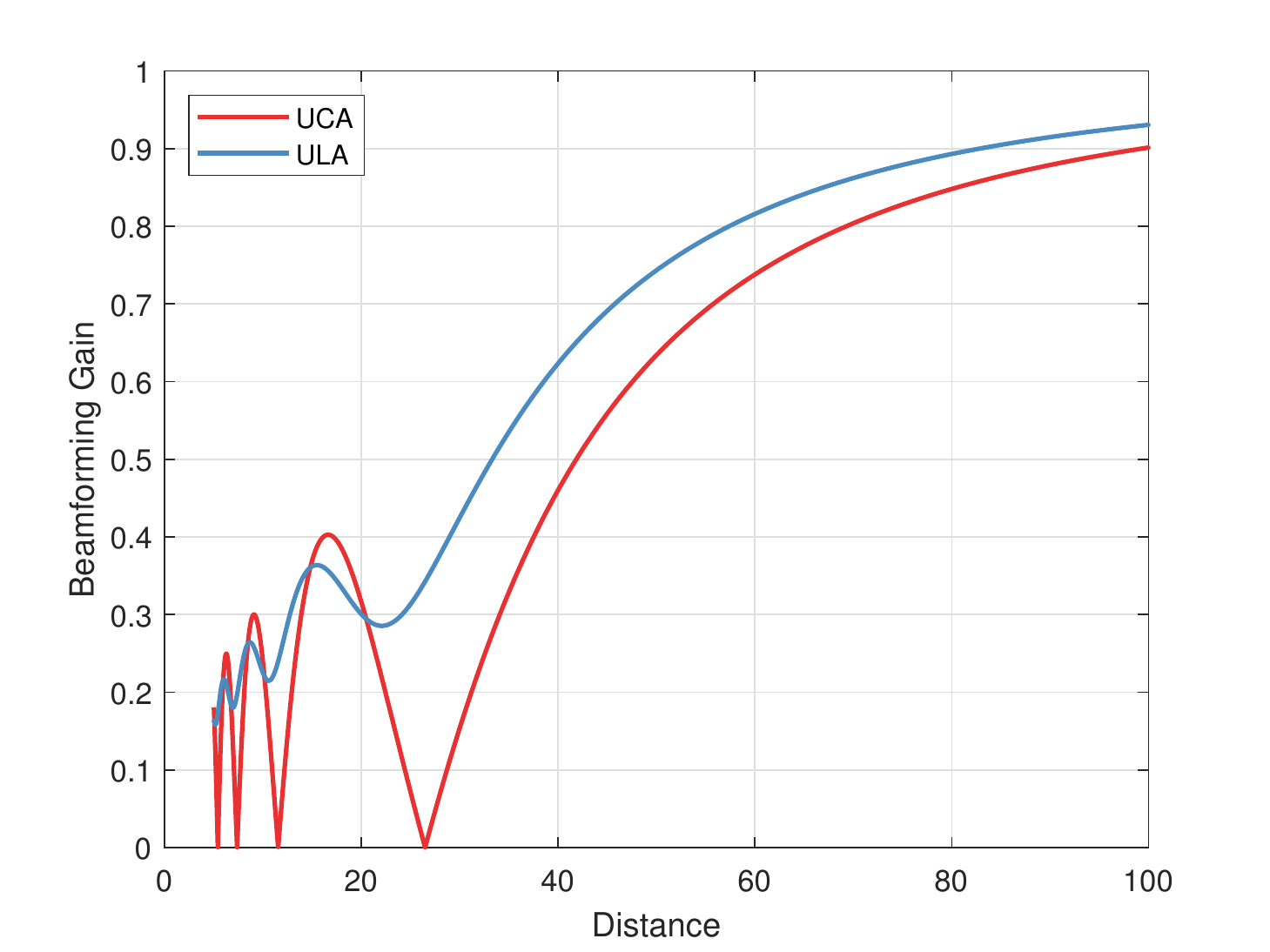}
    \caption{Comparison of the beamforming gain of UCA and ULA.}
    \label{img: UCA ULA gain}
\end{figure}
\vspace{-1em}
\section{Design of the Concentric-Ring Codebook}\label{sec: codebook}
\par In current 5G systems, a beam codebook perfectly matching the far-field channel is usually defined in advance, in order to perform efficient beamforming, beam training, channel feedback, etc~\cite{Heath'16'j}. In this section, the near-field concentric-ring codebook for UCA is introduced to adapt to the emerging near-field propagation model with the widely adopted beamsteering design methods, which employ beam steering (or focusing) vectors to construct the codebook~\cite{Heath'15'j}. 
\par We are aiming to design the near-field UCA codebook ${\bf{W}} = \left\{ {\bf{b}}(r_1, \phi_1),\cdots, {\bf{b}}(r_{N_c}, \phi_{N_c}) \right\}$, where $N_c$ denotes the number of codeword vectors in the codebook. For fixed codebook size, the common design criterion is to minimize the correlation of codeword vectors, which is to say
\begin{equation}
\label{eq: codebook}
\begin{aligned}
\underset{1\leq i < j \leq N_c}{{\min}} ~ \left|{\bf{b}}^H(r_i, \phi_i){\bf{b}}(r_j, \phi_j) \right|.
\end{aligned}
\end{equation}
Or in another way, we can maximize the codebook size $N_c$ with the constraint on correlation of codewords as $\left|{\bf{b}}^H(r_i, \phi_i){\bf{b}}(r_j, \phi_j) \right| \leq \Delta$ for $i \neq j$~\cite{Cui'22'tcom}, where $\Delta$ denotes the threshold on the maximum allowable correlation.

\par To solve this problem, a sampling procedure determining the values of $r$ and $\phi$ needs to be carried out. According to {\bf{Lemma~\ref{lemma1}}}, the beamforming gain in the angular domain is independent of the propagation distance for users at the same distances. Similarly, the beamforming gain in the distance domain is also independent of angles according to {\bf{Lemma~\ref{lemma2}}}. Therefore, the angular sampling and distance sampling could be decoupled and performed separately, with the target of performing dense sampling satisfying the correlation constraint. Since the sampled distances remain the same at any angle, the samples form the shape of multiple concentric rings, which is termed the concentric-ring codebook. The design procedure of the proposed concentric-ring codebook includes two stages.
\par First, we focus on the sampling in the angular domain. When neglecting the influence of propagation distance $r$, the angles could be selected with the correlation lower or equal to the predetermined threshold $\Delta$, which is to say $g(r,\phi_1,r,\phi_2) = \left| J_0(\beta) \right| \leq \Delta$. Since $|J_0(\cdot)|$ is monotonically decreasing in its main lobe, we can ensure the correlation of neighboring samples is equal to the threshold $\Delta$ by $\phi_{\Delta} = 2\sin^{-1}\left(\frac{\lambda J_0^{-1}(\Delta)}{4\pi R}\right)$ for $J_0^{-1}(\cdot)$ defined in equation~\eqref{eq: erd uca}\footnote{Such sampling method can only ensure that the correlation of neighboring samples is less than $\Delta$, since only the main lobe is considered to determine the space of neighboring samples. Owing to the oscillation of Bessel functions, the correlation of non-neighboring samples may exceed the threshold. To provide a succinct design method, we exclude such scenarios by assuming $\Delta \geq 0.403$, indicating that allowable threshold is greater than peak of the first sidelobe.}. Thus, we can obtain the angular samples as
\begin{equation}
\label{eq: angular samples}
\begin{aligned}
\phi_{s_1} =  s_1 \phi_{\Delta},~~{\rm{for}}~s_1 = 0,1,2,\cdots,S_1,
\end{aligned}
\end{equation}
where $S_1 = \left \lfloor \frac{2\pi}{\phi_{\Delta}} \right \rfloor - 1$ which ensures that the samples are in the primary period $\phi_{s_1} \in [0, 2\pi)$.
\par Second, once the sampled angles are determined, we focus on distance sampling at selected angles. According to~{\bf{Lemma}~\ref{lemma2}}, the correlation of distance samples in the same direction is lower than the threshold by $\left|\frac{1}{r_p} - \frac{1}{r_q}\right| \geq \frac{\lambda J_0^{-1}(\Delta)}{2\pi R^2}$. Therefore, we could follow a sampling method as
\begin{equation}
\label{eq: distance samples}
\begin{aligned}
r_{s_2} = \frac{1}{s_2} \frac{2\pi R^2}{\lambda J_0^{-1}(\Delta)},~~{\rm{for}}~s_2 = 0,1,\cdots,S_2.
\end{aligned}
\end{equation}
Since it is very rare for communications to happen at very near locations and therefore a minimum distance $r_{\rm{min}}$ is assumed. The parameter $S_2 = \left \lfloor \frac{2\pi R^2}{\lambda J_0^{-1}(\Delta) r_{\rm{min}}} \right \rfloor$ in~\eqref{eq: distance samples} aims to fulfill the minimum distance constraint and $s_2 =0$ represents the sampling at infinite distance, which is equivalent to classical far-field beamforming. The distance sampling method above naturally satisfies the correlation constraints.

\par Finally, since the procedure determining $\phi_{s_1}$ in angle domain and the procedure determining $r_{s_2}$ in distance domain are independent of each other, these two processes could be performed respectively and together constitute the whole codebook as $(r_{s_2}, \phi_{s_1})$. The construction process is summarized in {\bf{Algorithm}~\ref{alg:1}}. The algorithm complexity is mainly determined by computing the required phase shifts for each codeword, which can be expressed as $\mathcal{O}(S_a S_d N)$, where $S_a = \left \lfloor \frac{2\pi}{\phi_{\Delta}} \right \rfloor + 1$ and $S_d = \left \lfloor \frac{r_{\Delta}}{r_{\rm{min}}} \right \rfloor + 1$ denote the number of samples in angular and distance domain, respectively. It is noteworthy that although the increased near-field codebook size may introduce the problem of overwhelming overheads in UCA systems, several works proposed to employ deep learning or hierarchical methods to reduce overheads in ULA systems~\cite{pancunhua'23'cl}, which provided potential solutions to tackle the overheads problem with UCA.

\begin{figure}[!t]
    \centering
    \setlength{\abovecaptionskip}{0.cm}
    \includegraphics[width=2.2in]{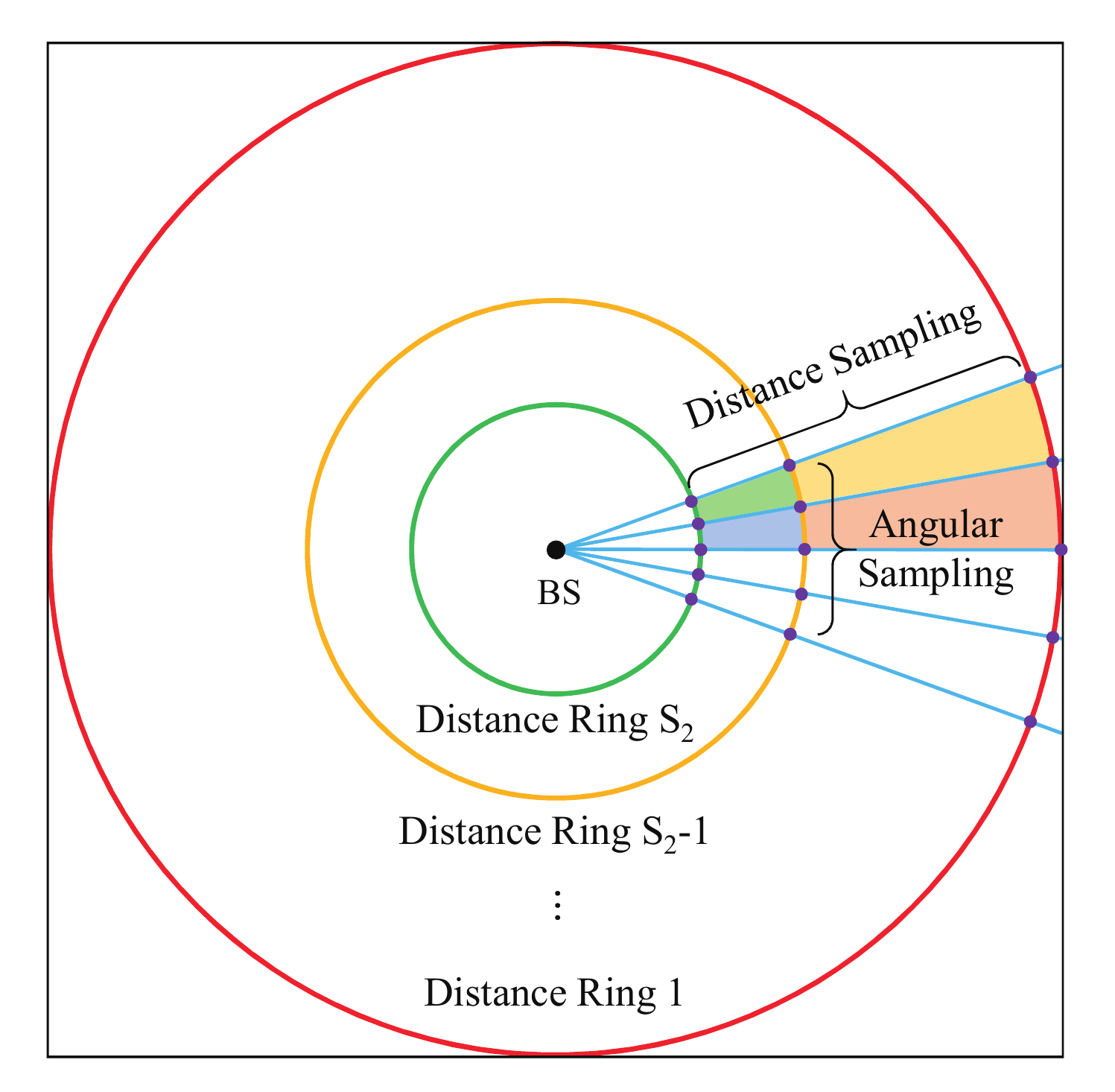}
    \caption{An example of the concentric-ring codebook for UCA.}
    \label{img: codebook}
\end{figure}

\begin{algorithm}[!t] 
    \caption{Construction of Concentric-Ring Codebook ${\bf{W}}$.} 
    \label{alg:1} 
    \begin{algorithmic}[1] 
        \REQUIRE ~ 
        Minimum distance $r_{\rm{min}}$; threshold $\Delta$; radius of the circular array $R$; wavelength $\lambda$;
        \ENSURE ~ 
        Concentric-ring codebook ${\bf{W}}$
        \STATE $s_1 = 0$, $s_2 = 0$, ${\bar{\Delta}} = J_0^{-1}(\Delta)$;
        \STATE $\phi_{\Delta} = 2\sin^{-1}(\frac{\lambda {\bar{\Delta}}}{4\pi R})$ by~\eqref{eq: gain near angular 2}, $r_{\Delta} = \frac{2\pi R^2}{\lambda{\bar{\Delta}}}$ by~\eqref{eq: gain near distance 2};
        \STATE $\Xi_{\phi} = \{0,1,\cdots, \left \lfloor \frac{2\pi}{\phi_{\Delta}} \right \rfloor \}$, $\Xi_{r} = \{0,1,\cdots, \left \lfloor \frac{r_{\Delta}}{r_{\rm{min}}} \right \rfloor \}$;
        \FOR{$s_1 \in \Xi_{\phi}$, $s_2 \in \Xi_{r}$}
        \STATE $\phi_{s_1} = s_1 \phi_{\Delta}$, $r_{s_2} = \frac{1}{s_2} r_{\Delta}$;
        \ENDFOR
        \STATE ${\bf{W}} = \left\{ {\bf{b}}(r_{s_2}, \phi_{s_1}) | s_1 \in \Xi_{\phi}, s_2 \in \Xi_{r} \right\}$
        \RETURN ${\bf{W}}$
    \end{algorithmic}
\end{algorithm}
\par An example of the concentric-ring codebook is shown in Fig.~\ref{img: codebook}. The concentric rings represent the sampling in distance domain, the radius of which is determined according to the distance sampling rule in~\eqref{eq: distance samples}. The rays radiated from BS represent the sampled angles according to~\eqref{eq: angular samples}. Finally, the intersections of rings and rays are sampled points that the near-field beamforming vectors should focus on.
\vspace{-1em}
\section{Generalization to Cylindrical Array}\label{sec: cyl}
\par For practical purposes, UPA is employed more commonly compared with ULA to achieve efficient utilization of the space in BS. Similarly, the 2D UCA could be generalized into the 3D structure array, cylindrical array, to ensure a coverage region in the azimuth plane as well as the elevation plane, whose array geometry is shown in Fig.~\ref{img: cylindrical}~\cite{Hussain'2005}. In this paper, we assume that all antennas could contribute to beamforming for illustration simplicity, which is a widely adopted assumption in theoretical analysis of cylindrical arrays~\cite{Hussain'2005,tanke'21'icsp}. Thus, the cylindrical array could be viewed as an aggregation of $2M+1$ concentric UCAs which are uniformly positioned along the $z$-axis with $z=md$, where $m=0, \pm1, \cdots, \pm M$ and the spacing of UCAs is $d$. To ensure the symmetry of the structure, the origin is set as the center of the middle UCA. Each UCA contains $N$ uniformly distributed elements as in Fig~\ref{img: UCA strcture}. Thus, the cylindrical array consists of $N\bar{M}$ antenna elements in total, where ${\bar{M}}$ is defined as ${\bar{M}} = 2M+1$. The user is considered to be located in the 3D space indexed by $\Theta = (r, \theta, \phi)$ where $r$, $\theta$, and $\phi$ denote the distance from the origin, the elevation angle, and the azimuth angle, respectively.

\begin{figure}[!t]
    \centering
    \setlength{\abovecaptionskip}{0.cm}
    \includegraphics[width=3.5in]{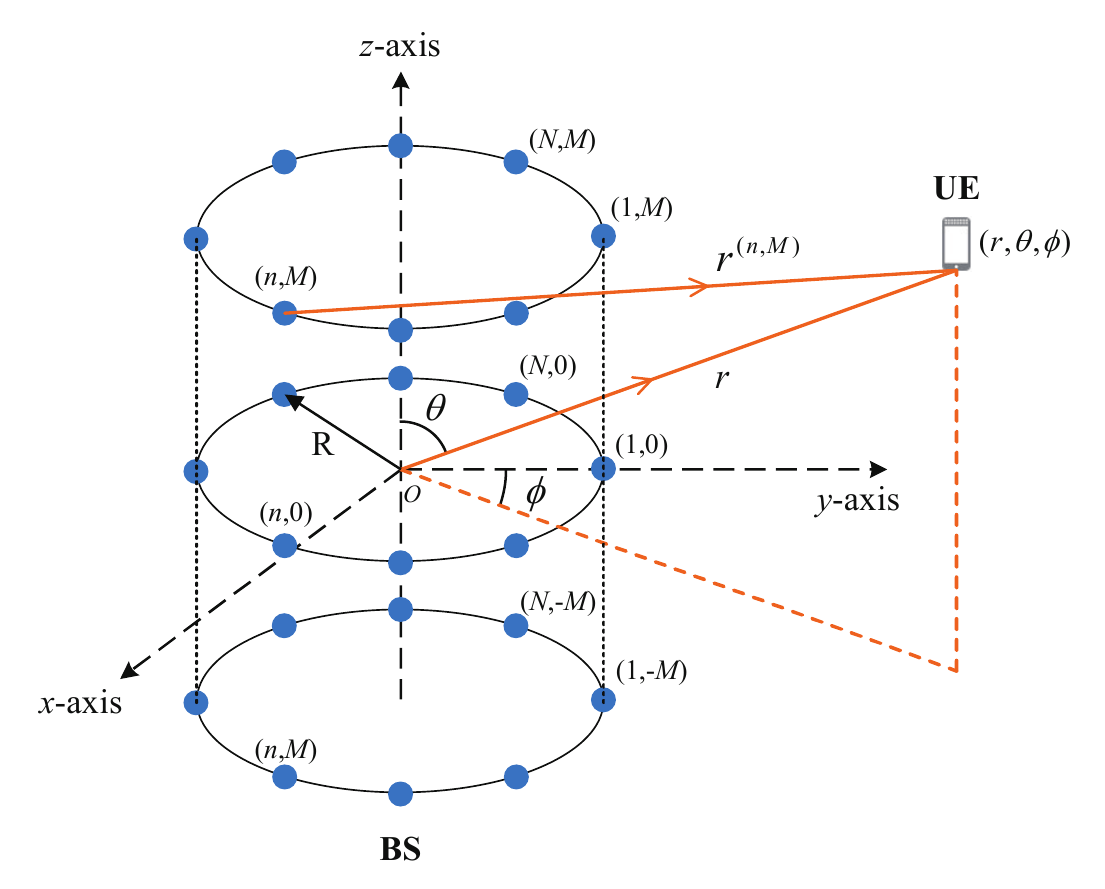}
    \caption{Geometry of the cylindrical array in 3D space.}
    \label{img: cylindrical}
    \vspace{-1.5em}
\end{figure}

\begin{figure*}[b]
\hrulefill
\begin{equation}
\label{eq: cylin distance}
\begin{aligned}
r^{(m,n)} &= \sqrt{r^2 + R^2 -2rR\sin\theta \cos(\phi-\psi_n) - 2mdr\cos\theta+m^2d^2}\\
&\approx r \underbrace{- R\sin\theta\cos(\phi-\psi_n)-md\cos\theta}_{\rm{first\mbox{-}order\ approximation}}\\
&~~~~\underbrace{ +\frac{R^2}{2r}(1-\sin^2\theta\cos^2(\phi-\psi_n))+\frac{m^2d^2}{2r}(1-\cos^2\theta)-\frac{Rd}{r}m\sin\theta\cos\theta\cos(\phi-\psi_n),}_{\rm{second\mbox{-}order\ approximation}}
\end{aligned}
\end{equation}
\end{figure*}

\par According to the geometrical relationship, the distance between user and the $(m,n)^{\rm{th}}$ antenna could be expressed as~\eqref{eq: cylin distance} at the bottom of next page, where the approximation is derived from the second-order Taylor series again. Note that if only the first-order approximation is adopted, we could obtain the classical far-field beamforming vector as in~\cite{Hussain'2005}. Following a similar process in Section~\ref{sec: sys}, the near-field beamforming vector can be defined as
\begin{equation}
\label{eq: cylin near-field vec}
\begin{aligned}
{\bf{b}}_{\rm{C}}(\Theta) &= \frac{1}{\sqrt{N {\bar{M}}}}\left[e^{-j\frac{2\pi}{\lambda}(r^{(1,-M)}-r)},\cdots, e^{-j\frac{2\pi}{\lambda}(r^{(N,M)}-r)} \right]^T,
\end{aligned}
\end{equation}
where the subscript $({\rm{C}})$ denotes the cylindrical array. Then, the beamforming gain is written as
\begin{equation}
\label{eq: gain cylin}
\begin{aligned}
g_{\rm{C}}(\Theta_1, \Theta_2) &= \frac{1}{N {\bar{M}}} \left|{\bf{b}}_{\rm{C}}^H(\Theta_1){\bf{b}}_{\rm{C}}(\Theta_2)\right|.
\end{aligned}
\end{equation}
It is worth noting that the expression in~\eqref{eq: gain cylin} is a universal formulation applicable to any locations of users and any settings of cylindrical array.
\par Apart from the universal expression above, to gain more intuition of the beamforming property of cylindrical arrays as well as highlighting its characteristics different from far-field beamforming, we now focus on the beamforming gain only in the distance domain where $\theta_1 = \theta_2$ and $\phi_1 = \phi_2$. In this case, equation~\eqref{eq: gain cylin} could be rewritten as
\begin{equation}
\label{eq: gain cylin distance}
\begin{aligned}
g_{\rm{C}}({\bar{\Theta}}_1, {\bar{\Theta}}_2) = \frac{1}{N {\bar{M}}} \left| \sum_{m=-M}^{M} \sum_{n=1}^{N} e^{j\frac{2\pi}{\lambda}(\frac{1}{r_1}-\frac{1}{r_2})(\chi_1 + \chi_2 + \chi_3)} \right|,
\end{aligned}
\end{equation}
where ${\bar{\Theta}}_1 = (r_1, \theta, \phi)$, ${\bar{\Theta}}_2 = (r_2, \theta, \phi)$ and the variables are defined as
\begin{equation}
\label{eq: chi define}
\begin{aligned}
\chi_1 &= \frac{R^2}{2}\left(1-\sin^2\theta\cos^2(\phi-\psi_n) \right)\\
\chi_2 &= \frac{m^2d^2}{2}(1-\cos^2\theta)\\
\chi_3 &= Rdm\sin\theta\cos\theta\cos(\phi-\psi_n).
\end{aligned}
\end{equation}
Due to the difficulties in addressing those two summations, we further focus on the beamforming performance on the $z=0$ plane, i.e. assuming $\theta = \pi/2$. The beamforming gain could be characterized through the following lemma.
\begin{lemma}
\label{lemma4}
The beamforming gain at location ${\bar{\Theta}}_1 = (r_1, \pi/2, \phi)$ achieved by employing near-field beam focusing vector ${\bf{b}}_{\rm{C}}({\bar{\Theta}}_2)$ with ${\bar{\Theta}}_2 = (r_2, \pi/2, \phi)$ is approximated by
\begin{equation}
\label{eq: gain cylin distance 2}
\begin{aligned}
g_{\rm{C}}({\bar{\Theta}}_1, {\bar{\Theta}}_2) \approx \left|G(\mu)J_0(\zeta)\right|, 
\end{aligned}
\end{equation}
where functions are defined as $G(\mu) = \frac{C(\mu)+jS(\mu)}{\mu}$, $C(\mu) = \int_{0}^{\mu} \cos(\frac{\pi}{2}t^2) {\rm{d}}t$, and $S(\mu) = \int_{0}^{\mu} \sin(\frac{\pi}{2}t^2){\rm{d}}t$~\cite{Sherman'62'j}, with $\mu = \sqrt{\frac{2M^2d^2}{\lambda}\left|\frac{1}{r_1}-\frac{1}{r_2}\right|}$ and $\zeta = \frac{2\pi R^2}{\lambda}\left|\frac{1}{4r_1}-\frac{1}{4r_2}\right|$.
\end{lemma}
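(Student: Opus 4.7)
The plan is to exploit the fact that at $\theta = \pi/2$ the double sum in \eqref{eq: gain cylin distance} factorizes cleanly into an $n$-sum over the ring and an independent $m$-sum over the vertical stack; \textbf{Lemma}~\ref{lemma2} then handles the ring factor, while the stack factor reduces to a chirp sum that I would approximate by Fresnel integrals.

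First I would substitute $\theta = \pi/2$ into the definitions in \eqref{eq: chi define}. Because $\cos\theta = 0$, the cross term $\chi_3$ vanishes identically and $\chi_2$ collapses to $m^2 d^2/2$, while $\chi_1 = (R^2/2)(1-\cos^2(\phi-\psi_n))$. Crucially, $\chi_1$ depends only on $n$ and $\chi_2$ only on $m$, so the exponent in \eqref{eq: gain cylin distance} additively separates and the exponential splits as a product of an $n$-factor and an $m$-factor. Since the modulus of a scalar product is the product of moduli, this gives
\begin{equation*}
g_{\rm{C}}({\bar{\Theta}}_1,{\bar{\Theta}}_2) = \frac{1}{N}\Bigl|\sum_{n=1}^{N} e^{j\frac{2\pi}{\lambda}(\frac{1}{r_1}-\frac{1}{r_2})\chi_1}\Bigr| \cdot \frac{1}{\bar M}\Bigl|\sum_{m=-M}^{M} e^{j\frac{2\pi}{\lambda}(\frac{1}{r_1}-\frac{1}{r_2})\chi_2}\Bigr|.
\end{equation*}
The first factor coincides exactly with the object analyzed in \textbf{Lemma}~\ref{lemma2}, so it equals $|J_0(\zeta)|$ with the stated $\zeta = \frac{2\pi R^2}{\lambda}|\frac{1}{4r_1}-\frac{1}{4r_2}|$.

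For the second factor, I would set $\alpha := \pi d^2(\frac{1}{r_1}-\frac{1}{r_2})/\lambda$ and approximate the chirp sum $\sum_{m=-M}^{M} e^{j\alpha m^2}$ by the continuous integral $\int_{-M}^{M} e^{j\alpha x^2}\,\mathrm{d}x$. Under the change of variable $u = x\sqrt{2\alpha/\pi}$ this integral equals $2\sqrt{\pi/(2\alpha)}\,[C(\mu)+jS(\mu)]$ with $\mu = M\sqrt{2\alpha/\pi}$; a short check confirms $\mu^2 = 2M^2 d^2|\frac{1}{r_1}-\frac{1}{r_2}|/\lambda$, matching the statement. Dividing by $\bar M = 2M+1$ and using $2M/(2M+1) \to 1$ produces $(C(\mu)+jS(\mu))/\mu = G(\mu)$. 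Combining the two factors yields the claimed $|G(\mu)\,J_0(\zeta)|$.

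The hard part will be justifying the discrete-to-continuous passage for the stack sum: the integrand $e^{j\alpha m^2}$ is a rapidly oscillating chirp, so bounding the Euler--Maclaurin/trapezoidal residual and showing that the $2M/(2M+1)$ boundary defect is truly negligible requires some care in the regime of practical array sizes and spacings. An analogous continuum approximation already underlies the proof of \textbf{Lemma}~\ref{lemma2} (where the angular sum around the ring is replaced by a circular integral), so I expect the same style of argument to close the gap here.
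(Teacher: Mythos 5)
Your proposal is correct and follows essentially the same route as the paper's proof: set $\theta=\pi/2$ so that $\chi_3$ vanishes, factor the double sum into the ring factor handled by \textbf{Lemma}~\ref{lemma2} (giving $|J_0(\zeta)|$) and the vertical chirp sum, then approximate the latter by a Fresnel integral to obtain $G(\mu)$. The only cosmetic difference is that the paper integrates over $[0,M]$ with a $1/M$ normalization (and treats the sign of $1/r_1-1/r_2$ by a case split) whereas you integrate over $[-M,M]$ with $1/(2M+1)$; these are equivalent.
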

\begin{proof}
The proof is provided in {\bf Appendix~\ref{app: lemma4}}.
\end{proof}

\par Compared with UCA, the expression of beamforming with cylindrical arrays is more complicated. Since the cylindrical array could be viewed as a series of linearly spaced UCAs, the expression of beamforming gain in~\eqref{eq: gain cylin distance 2} could be decomposed into two components, the UCA component represented by $J_0(\zeta)$, and the ULA component represented by $G(\mu)$. Therefore, if we only increase $M$, the zero points of beamforming gain against $\left|\frac{1}{r_1}-\frac{1}{r_2}\right|$ remain unchanged and the height of side lobes will be reduced as $M$ increases. To show the correctness of our analysis, a simulation of beamforming with cylindrical arrays will be presented in the following section.
\section{Simulation Results}\label{sec: sim}
\par In this section, simulations are provided to verify our theoretical analysis of the beamforming with UCA and cylindrical arrays. We first assume that an $800$-element UCA is employed at BS with half-wavelength spacing and operating at $30$ GHz. Therefore the radius of UCA is about $0.64\,{\rm{m}}$, which is equivalent to the array aperture of a $256$-element ULA. The simulation parameters are summarized in Table~\ref{tab:1}.
\begin{table}[!t]
\renewcommand{\arraystretch}{1}
\caption{Simulation Parameters}
\label{tab:1}
\centering
\begin{tabular}{|*{2}{c|}}
\hline
The number of UCA antennas $N$ & 800\\
\hline
Carrier frequency $f$ & 30 GHz \\
\hline
Radius of UCA $R$ & 0.64 m \\
\hline
Spacing between antennas $d$ & 0.5 cm \\
\hline
Focal point ($r$, $\phi$) in Fig. 7 & (20 m, 0) \\
\hline
Distances of pair of focal points ($r_1$, $r_2$) in Fig. 8 & (20 m, 30 m) \\
\hline
Number of paths $L$ in Fig. 10 & 3 \\
\hline
Physical distances of users in Fig. 10 & $\mathcal{U} [4\,{\rm{m}},50\,{\rm{m}}]$ \\
\hline
\end{tabular}
\end{table}
\par First, we evaluate the effectiveness of estimations on the beamforming gain in the angular domain. The beamforming gain against spatial angles is plotted in Fig.~\ref{img: sim angular}, which indicates that estimations in {\bf{Lemma}~\ref{lemma1}} perfectly match the beamforming gain accurately calculated from the geometry relationships in~\eqref{eq: gain near}. In addition, it can also be seen that beamforming gain against directions is invariant with propagation distances, which verifies the accuracy of the assumption in {\bf{Lemma}~\ref{lemma1}}.

 \begin{figure}[!t]
    \centering
    \subfigure[Angular Domain]{
    	\includegraphics[width=3in]{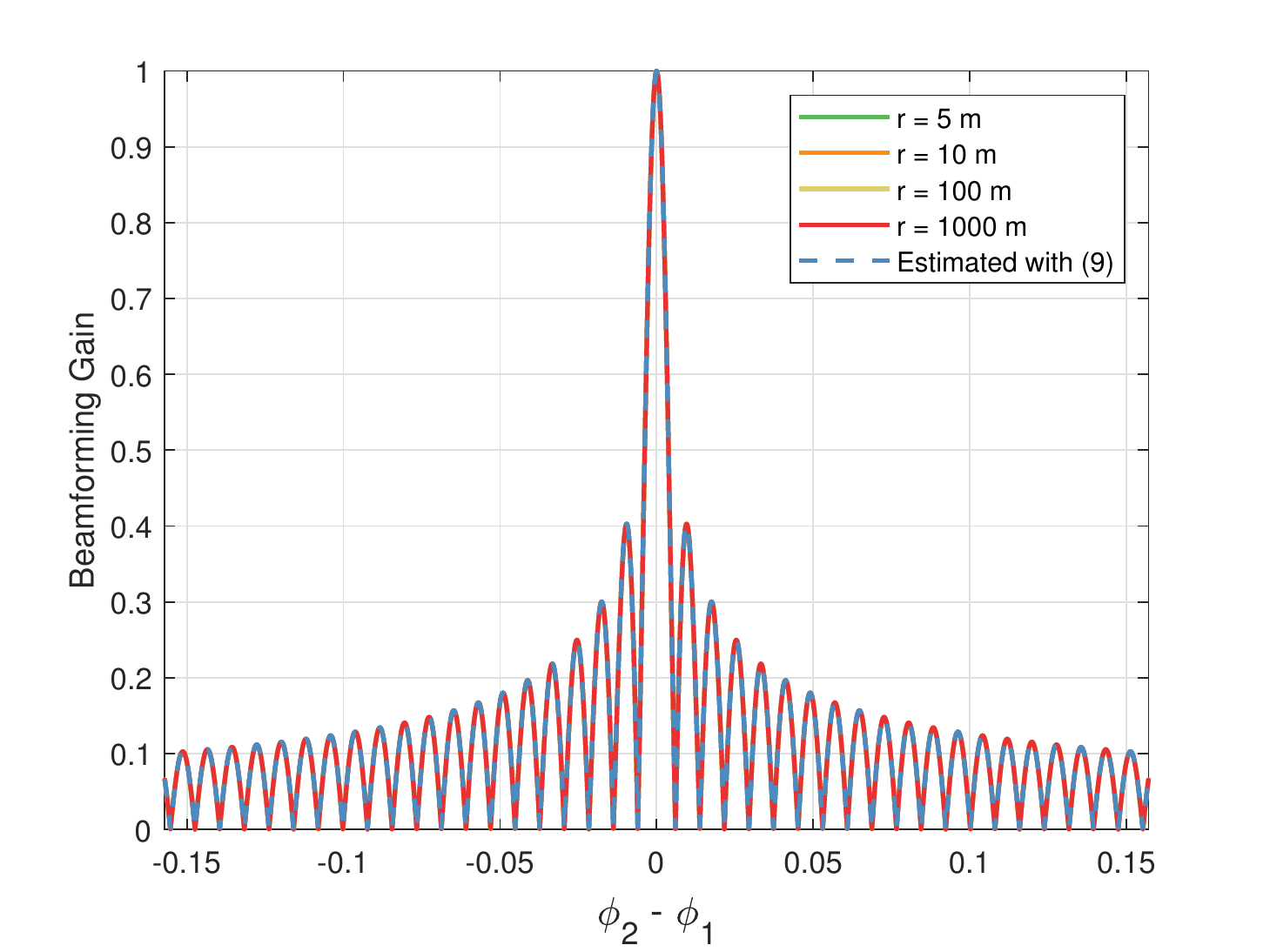}
    	\label{img: sim angular}}
    \subfigure[Distance Domain]{
    	\includegraphics[width=3in]{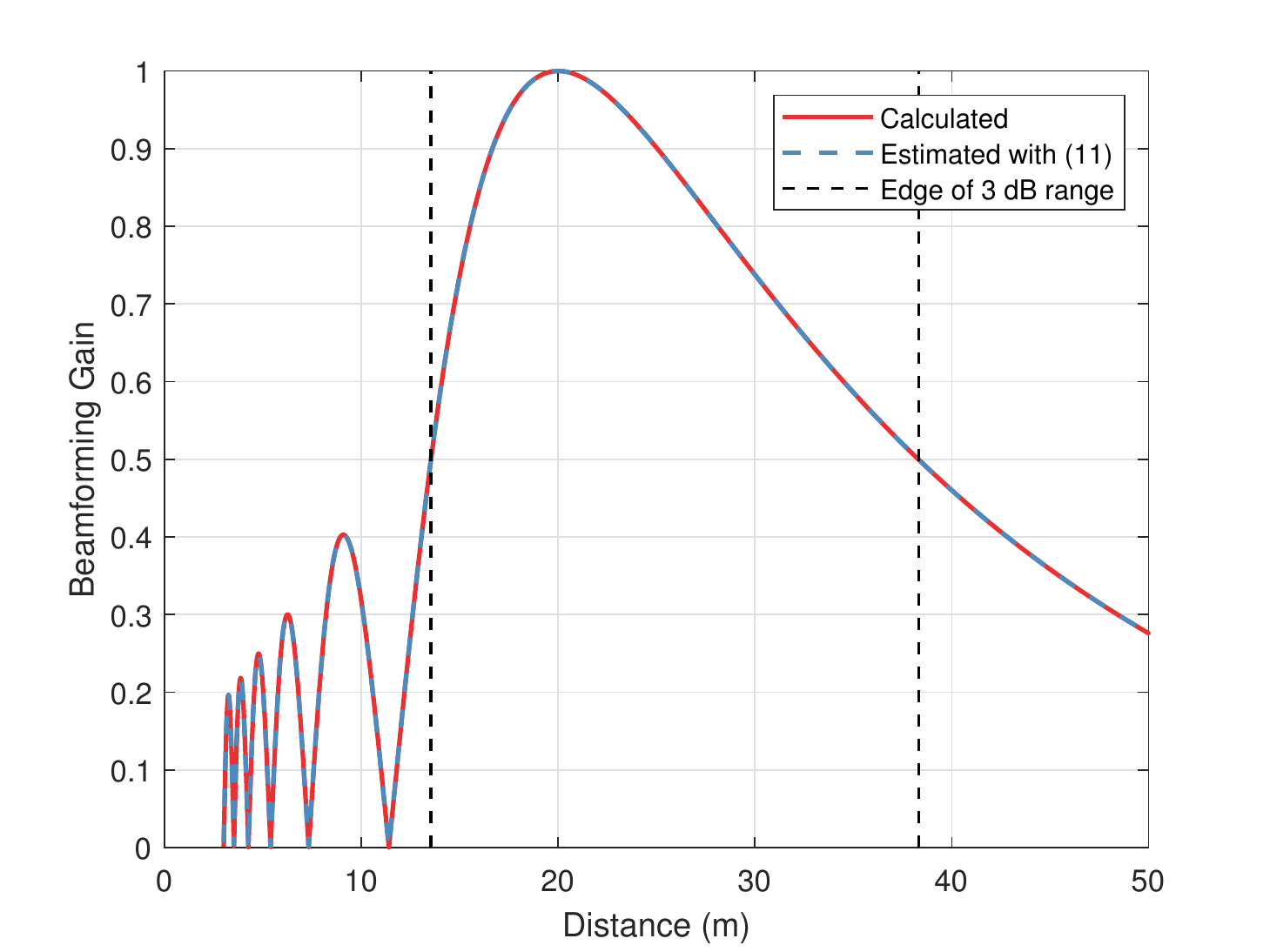}
    	\label{img: sim distance}}
    \caption{Verification of beamforming gain in angular and distance domains.}
    \label{img: beams ang dis focus}
    \vspace{-1em}
\end{figure}

\begin{figure}[!t]
    \centering
    \setlength{\abovecaptionskip}{0.cm}
    \includegraphics[width=3in]{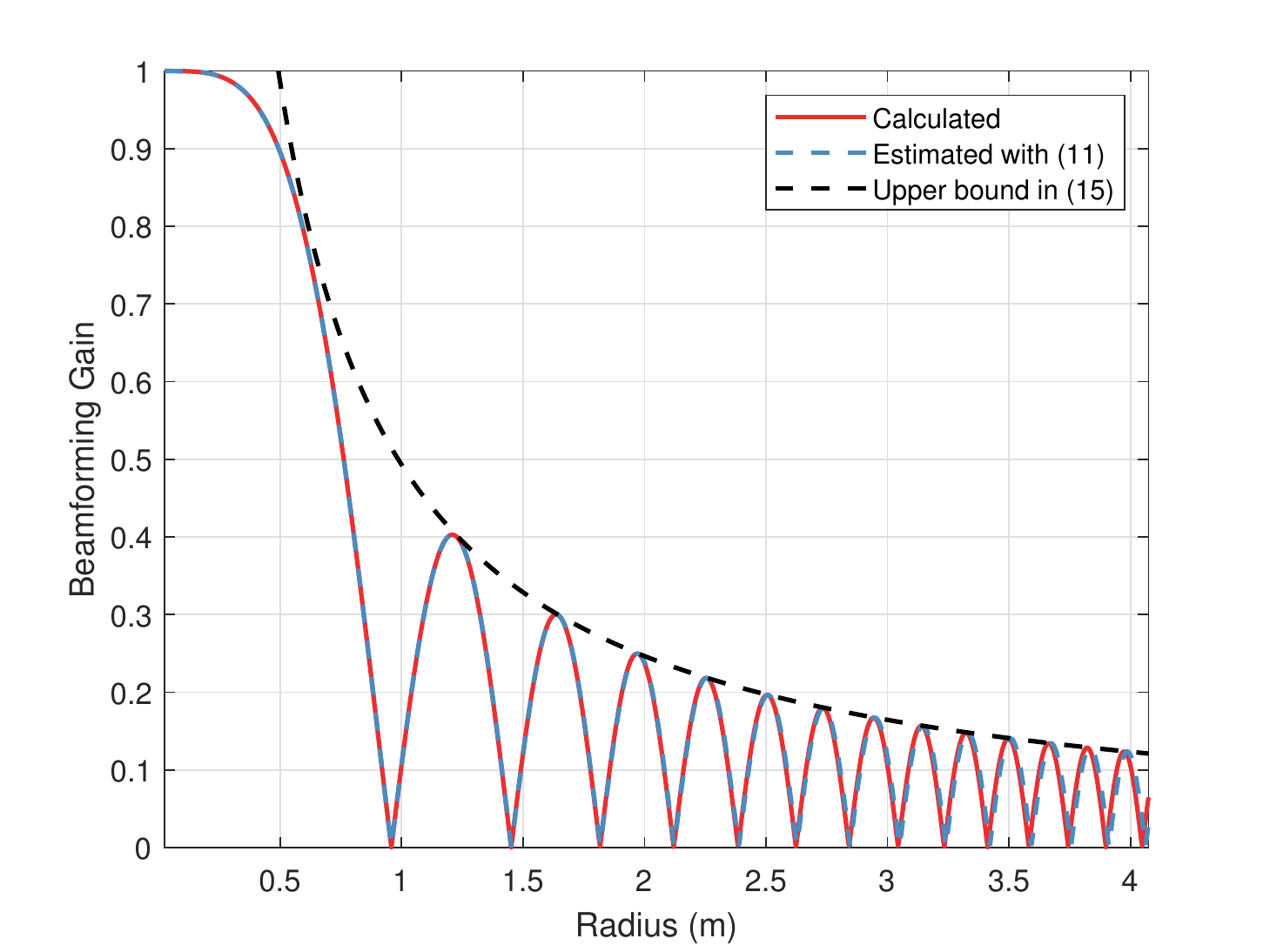}
    \caption{Verification of upper bound of beamforming gain.}
    \label{img: upper bound}
    \vspace{-1em}
\end{figure}

\par Then, the beamforming gain in the distance domain is verified in Fig.~\ref{img: sim distance}. The focal point is selected as $r = 20\,{\rm{m}}$ and $\phi = 0$. The estimated beamforming gain in {\bf{Lemma}~\ref{lemma2}} plotted in the blue line perfectly match the calculation results plotted in the red line. The depth-of-focus obtained in~\eqref{eq: beam depth} is also consistent with the simulation results.

\par To verify the upper bound derived in {\bf{Corollary}~\ref{coro2}}, we assume fixed $r_1 = 20\,{\rm{m}}$ and $r_2 = 30\,{\rm{m}}$, and plot the beamforming gain against the radius of UCA in Fig.~\ref{img: upper bound}. The upper bound accurately captures the descending trend of the envelope of  beamforming gain.

 \begin{figure}[!t]
    \centering
    \subfigure[Angular Samplings]{
    	\includegraphics[width=2.5in]{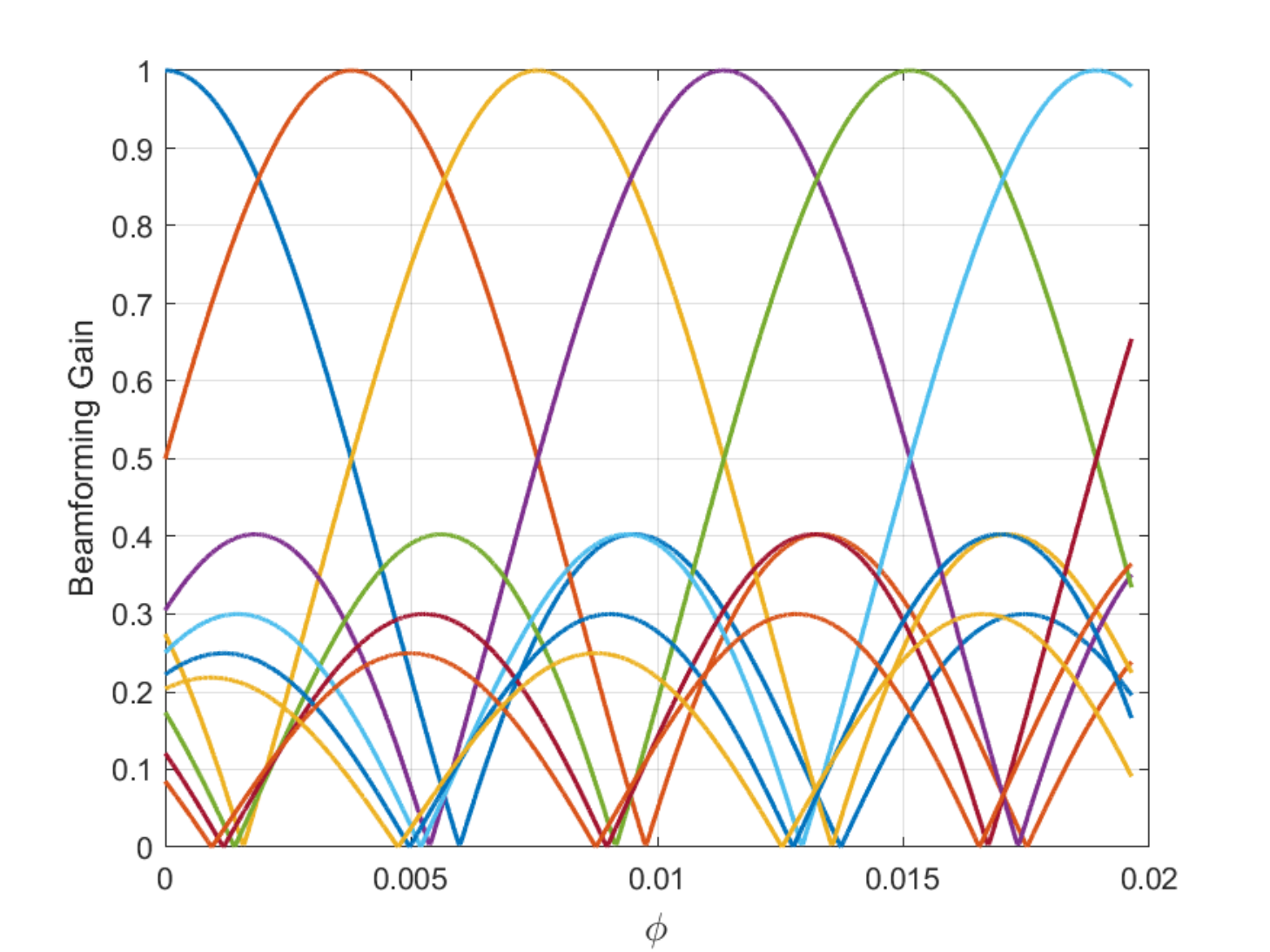}}
    \subfigure[Distance Samplings]{
    	\includegraphics[width=2.5in]{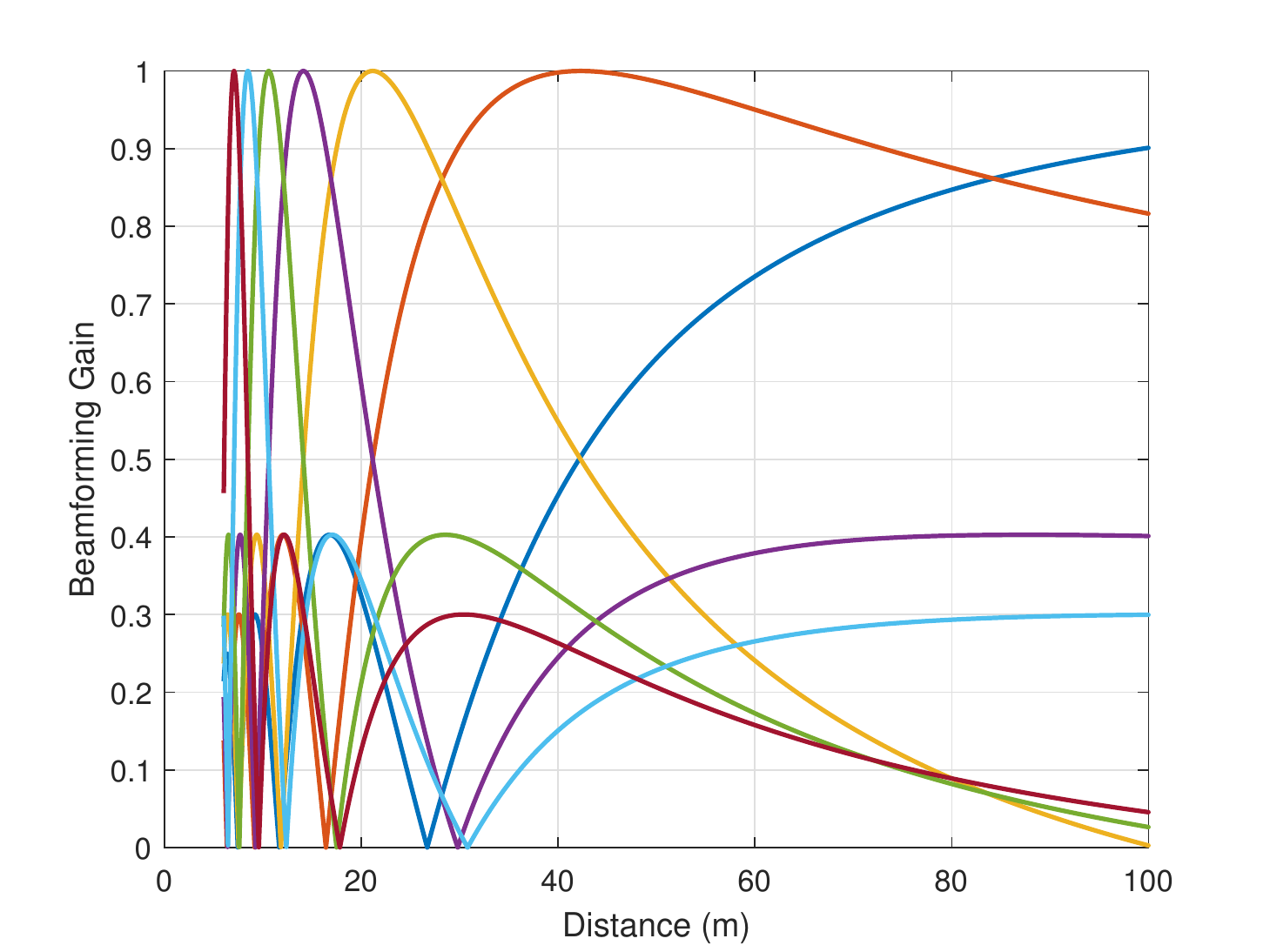}}
    \caption{Illustration of the sampling method in angular and distance domain.}
    \label{img: beams ang dis}
    \vspace{-1.5em}
\end{figure}

\par To verify the effectiveness of the proposed codebook design, the threshold is set to $\Delta = 0.5$ with minimum distance $r_{\rm{min}} = 4\,{\rm{m}}$. The beamforming gain for different codewords in both angular and distance domains is plotted in Fig.~\ref{img: beams ang dis}, where different colored lines denote the beamforming gain for different codewords. It can be seen that the peak of one codeword corresponds to the 3-{\rm{dB}} beamforming attenuation point of adjacent codewords, indicating that correlation of codewords does not exceed the predetermined threshold $\Delta = 0.5$.

\par To further verify the necessity of employing near-field beamforming in the near-field region, we compare the achievable rate $R = \log_2\left(1+\frac{P|{\bf{h}}_{\rm{near}}^H {\bf{w}}^*|^2}{\sigma_n^2}\right)$ with different beamforming codebooks, where $P$ and $\sigma_n$ denote the transmitted power and noise power, respectively. The wireless channel is modeled as in~\eqref{eq: near-field channel} with the number of paths $L=3$. The propagation distance in each path is assumed to be uniformly distributed within the range $[4\,{\rm{m}},50\,{\rm{m}}]$. The selected beamforming vector ${\bf{w}}^*$ is chosen from the corresponding codebook that maximizes the beamforming gain defined in~\eqref{eq: gain near}. The near-field concentric-ring codebook is employed in comparison to classical far-field codebooks as shown in Fig.~\ref{img: sum rate}. The proposed codebook could obtain $25\%$ performance gain compared with far-field codebooks and approach optimal beamforming in different SNRs. It is worth noting that, the performance gain originates from the increased beamforming gain in the near-field region. As the communication distances scale up, the performance of near-field codebook will degrade into the far-field counterpart.

\begin{figure}[!t]
    \centering
    \setlength{\abovecaptionskip}{0.cm}
    \includegraphics[width=3in]{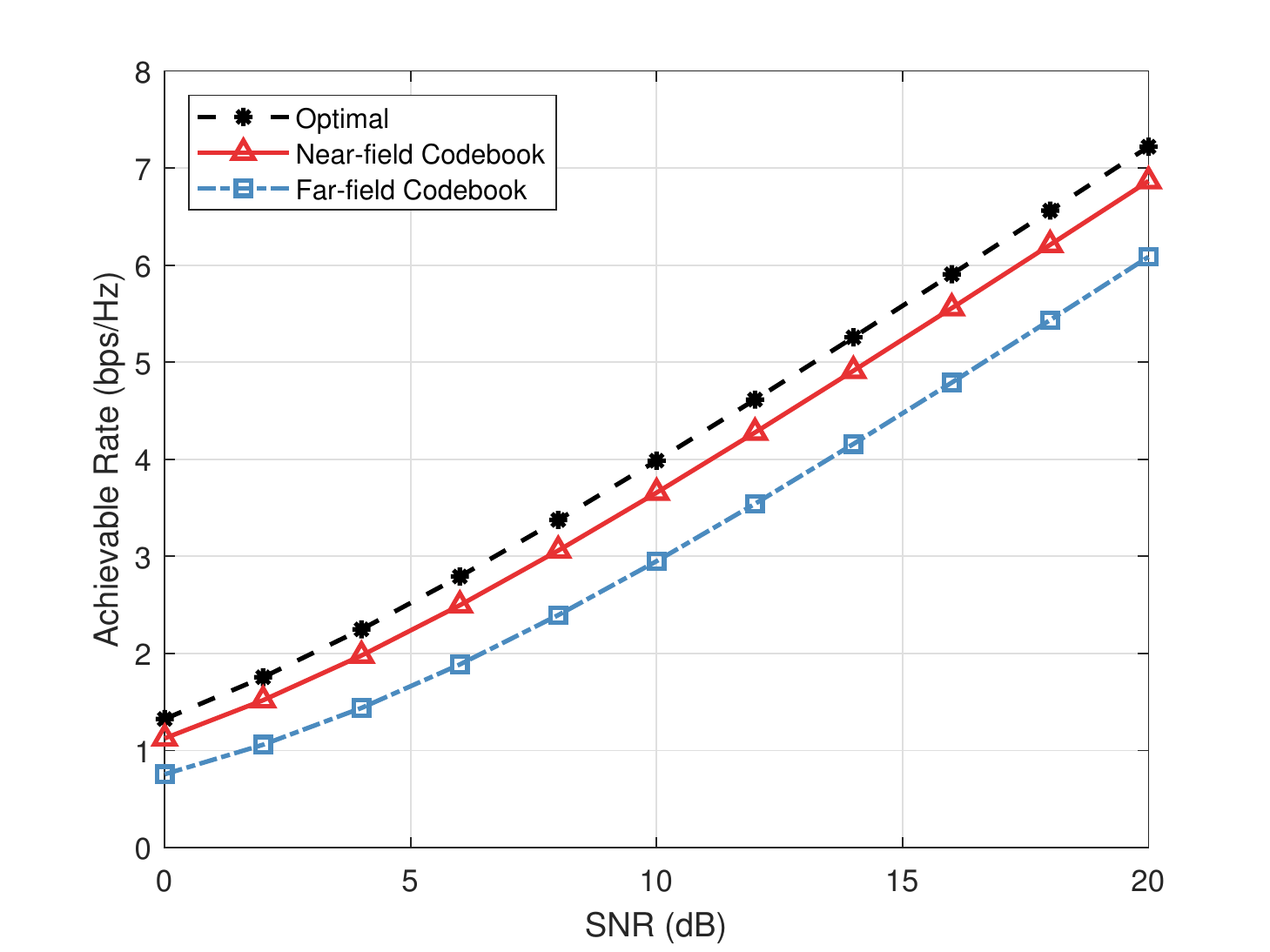}
    \caption{Comparison of the achievable sum rate.}
    \label{img: sum rate}
    \vspace{-1.5em}
\end{figure}

\par Another simulation is performed to verify the analysis for cylindrical arrays. The number of antennas on each UCA is set to $N=600$ with a varying number of concentric UCAs. The operating frequency is again set to $30$ GHz. Assuming the beam focusing vector ${\bf{b}}_{\rm{C}}(\infty, \pi/2, 0)$ is adopted, the beamforming gain against different distances is shown in Fig.~\ref{img: cylin gain}. It is shown that~\eqref{eq: gain cylin distance 2} well matches the simulations with different settings of $M$. In addition, the location of zeros remains unchanged and the height of side lobes could be suppressed by increasing the value of $M$, showing its additional potential to control the shape of side lobes compared with UCA.

\begin{figure}[!t]
    \centering
    \setlength{\abovecaptionskip}{0.cm}
    \includegraphics[width=3in]{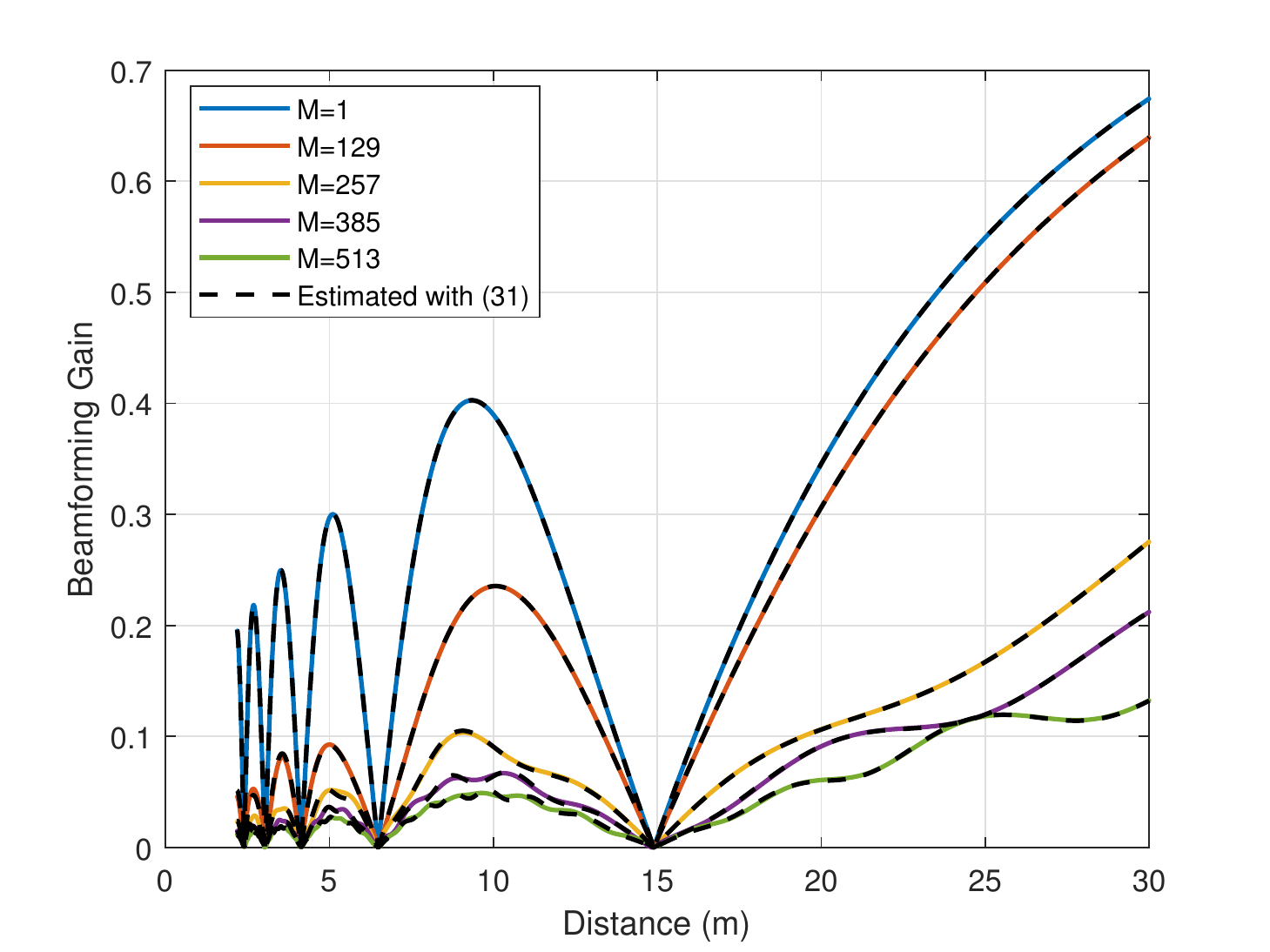}
    \caption{Beamforming gain against distances for different settings of $M$.}
    \label{img: cylin gain}
\end{figure}

\section{Conclusion}\label{sec: con}
In this paper, the near-field beamforming property of UCA is investigated by exploiting the array geometrical features. The focusing ability in the distance domain and its asymptotic property are theoretically characterized, indicating the potential of UCA to enlarge the near-field region compared with ULA. The concentric-ring codebook is proposed to enable codebook-based beamforming for near-field UCA communications. Then, the beamforming for a generalization form of UCA, cylindrical array, is briefly discussed. Simulation results verify the effectiveness of the theoretical analysis. The analysis of UCA beamforming presents a way to broaden the near-field region to enable communication technologies highly dependent on near-field propagation environments. In terms of future research, there is a need for further validation of the potential of UCA to effectively mitigate  interferences in multi-user communication scenarios. Apart from the cylindrical array as the 3D extension of circular arrays, spherical arrays are promising to avoid the decrease of effective aperture in the elevation angle to further enlarge the near-field region, which could be also investigated in the future. The blockage problem in UCAs and cylindrical arrays, where only a subset of antennas could contribute effectively to beamforming, is also a critical condition that requires further investigation.
\vspace{-1em}
\appendices
\section{Proof of Lemma~\ref{lemma1}}\label{app: lemma1}
With the assumption $r_1=r_2=r$, the beamforming gain can be expressed as
\begin{equation}
\label{eq: app1 1}
\begin{aligned}
g(r, \phi_1, r, \phi_2) & = \frac{1}{N} \left| \sum_{n=1}^{N} e^{j\frac{2\pi}{\lambda}\sqrt{r^2+R^2-2rR\cos(\phi_1 - \psi_n)}}\right.\\
&~~~~ \left. \times e^{-j\frac{2\pi}{\lambda}\sqrt{r^2+R^2-2rR\cos(\phi_2 - \psi_n)}} \right|\\
& \mathop{\approx}\limits^{(a)} \frac{1}{N} \left| \sum_{n=1}^{N} e^{-j\frac{2\pi}{\lambda}\left[R\cos(\phi_1 - \psi_n) - R\cos(\phi_2 - \psi_n)\right]} \right|\\
&\mathop{=}\limits^{(b)} \frac{1}{N} \left| \sum_{n=1}^{N} e^{j\beta \sin(\psi_n - \frac{\phi_1+\phi_2}{2})} \right|,
\end{aligned}
\end{equation}
where $\beta = \frac{4\pi R}{\lambda} \sin(\frac{\phi_2-\phi_1}{2})$. Approximation (a) is derived by only keeping the first-order of Taylor series expansion. Equation (b) is derived with the trigonometric identity. Then, we utilize Jacobi–Anger expansion of Bessel functions as~\cite{bowman2012introduction}
\begin{equation}
\label{eq: app1 2}
\begin{aligned}
e^{j\beta \cos\gamma} = \sum_{m=-\infty}^{\infty} j^m J_m(\beta) e^{jm\gamma},
\end{aligned}
\end{equation}
where $J_m(\cdot)$ denotes the $m$-order bessel function of the first kind. The beamforming gain can be rewritten as
\begin{equation}
\label{eq: app1 3}
\begin{aligned}
g(r, \phi_1, r, \phi_2) &\approx \frac{1}{N}\left|\sum_{n=1}^{N} \sum_{m=-\infty}^{\infty} j^m J_m(\beta) e^{jm(\frac{\pi}{2}-\psi_n+\frac{\phi_1+\phi_2}{2})} \right|\\
&= \frac{1}{N}\left|\sum_{m=-\infty}^{\infty} j^m J_m(\beta) e^{jm\frac{\phi_1+\phi_2+\pi}{2}} \sum_{n=1}^{N} e^{-jm\psi_n} \right|.
\end{aligned}
\end{equation}
Then, the summation over $n$ can be derived by the geometric progression as the piecewise function as
\begin{equation}
\label{eq: app1 4}
\sum_{n=1}^{N} e^{-jm\psi_n}=\left\{
\begin{aligned}
N, \quad &m = N \cdot t, t \in \mathbb{Z} \\
0, \quad &m \neq N \cdot t, t \in \mathbb{Z},\\
\end{aligned}
\right
.
\end{equation}
which means the summation equals to zero except $m$ is the integer multiple of the number of antennas $N$. According to the asymptotic property of Bessel function as
\begin{equation}
\label{eq: app1 5}
\begin{aligned}
\left| J_{|m|}(\beta) \right| \leq \left( \frac{\beta e}{2|m|} \right) ^{|m|},
\end{aligned}
\end{equation}
we can obtain the approximation $|J_{|m|}(\beta)| \approx 0$ if $m = N \cdot t$ with $t \neq 0$ for large $N$. Therefore, the summation could be well approximated with the term of $m=0$. Similar approximations were also adopted in~\cite{Zhang'17'tawp}. To sum up, the beamforming gain could be finally simplified as
\begin{equation}
\label{eq: app1 6}
\begin{aligned}
g(r, \phi_1, r, \phi_2) &\approx \left| J_0(\beta) \right|.
\end{aligned}
\end{equation}
In addition, the approximation error could be characterized by the first two terms as
\begin{equation}
\label{eq: app1 7}
\begin{aligned}
\Delta_g& = \left|g(r, \phi_1, r, \phi_2)-J_0(\beta)\right|\\
& \approx \left| j^{N} J_N(\beta) \cdot 2\cos\left(\frac{\phi_1+\phi_2+\pi}{2}N\right)  \right| \leq 2\left( \frac{\beta e}{2N} \right)^N,
\end{aligned}
\end{equation}
which is monotonically decreasing with $N$. Therefore, if we assume a large $N$, the approximation~\eqref{eq: app1 6} could achieve an accurate approximation, which completes the proof.

\section{Proof of Lemma~\ref{lemma2}}\label{app: lemma2}
Substituting the expression of $\xi_{r,\phi}^{(n)}$ into~\eqref{eq: gain near} with the condition $\phi_1 = \phi_2 = \phi$, we have 
\begin{equation}
\label{eq: app2 1}
\begin{aligned}
g(r_1, \phi, r_2, \phi) & \mathop{\approx}\limits^{(a)} \frac{1}{N} \left|\sum_{n=1}^{N} e^{j \frac{2\pi}{\lambda}\left\{\left(\frac{R^2}{4r_1}-\frac{R^2}{4r_2}\right)\left(1-\cos(2\phi-2\psi_n)\right)\right\} } \right|\\
& \mathop{=}\limits^{(b)} \frac{1}{N} \left| e^{j \frac{2\pi}{\lambda} \left(\frac{R^2}{4r_1}-\frac{R^2}{4r_2}\right) } \cdot \sum_{n=1}^{N} e^{j \bar{\zeta} \cos(2\phi-2\psi_n) } \right|\\
& = \frac{1}{N} \left|\sum_{n=1}^{N} e^{j \bar{\zeta} \cos(2\phi-2\psi_n) } \right|,
\end{aligned}
\end{equation}
where approximation (a) is obtained by trigonometric functions $\cos^2(x) = \frac{1+\cos(2x)}{2}$. Equation (b) is derived by assuming $\bar{\zeta} = \frac{2\pi}{\lambda}\left(-\frac{R^2}{4r_1}+\frac{R^2}{4r_2}\right)$. Again, our analysis hinges on the Jacobi–Anger expansion in~\eqref{eq: app1 2}. Thus, the beamforming gain can be rewritten as
\begin{equation}
\label{eq: app2 3}
\begin{aligned}
g(r_1, \phi, r_2, \phi) &\approx \frac{1}{N}\left|\sum_{n=1}^{N} \sum_{m=-\infty}^{\infty} j^m J_m(\bar{\zeta}) e^{jm(2\phi-2\psi_n)} \right|\\
& = \frac{1}{N}\left|\sum_{m=-\infty}^{\infty} j^m J_m(\bar{\zeta}) e^{2jm\phi} \sum_{n=1}^{N} e^{-2jm\psi_n} \right|\\
& \mathop{\approx}\limits^{(c)} \left| J_0(\bar{\zeta}) \right|,
\end{aligned}
\end{equation}
where approximation (c) is obtained similar to Appendix~\ref{app: lemma1}.
\par In addition, with the property of the even function $J_0(\cdot)$, $J_0(\bar{\zeta}) = J_0(\zeta)$ can be ensured with $\zeta = |\bar{\zeta}| = \frac{2\pi}{\lambda}\left|\frac{R^2}{4r_1}-\frac{R^2}{4r_2}\right|$, which completes the proof.



\section{Proof of Lemma~\ref{lemma4}}\label{app: lemma4}
Following the definition of beamforming gain in~\eqref{eq: gain cylin distance} with the assumption $\theta = \pi/2$, we can obtain $\chi_3 = 0$. Then, the beamforming gain could be simplified by switching the order of summation as
\begin{equation}
\label{eq: app4 1}
\begin{aligned}
g_{\rm{C}} &= \left| \frac{1}{N} \sum_{n=1}^{N} e^{j\frac{\pi R^2}{\lambda}\left(\frac{1}{r_1}-\frac{1}{r_2}\right)  \left(1-\cos^2(\phi-\psi_n) \right)}\right.\\
& ~~~~ \left. \times \frac{1}{{\bar{M}}} \sum_{m=-M}^{M} e^{j\frac{\pi m^2d^2}{\lambda}\left(\frac{1}{r_1}-\frac{1}{r_2}\right)} \right|,
\end{aligned}
\end{equation}
where the variables ${\bar{\Theta}}_1$ and ${\bar{\Theta}}_2$ are neglected for expression simplicity. Note that the first summation over $n$ could be addressed similarly as in~{\bf Appendix~\ref{app: lemma2}}, we turn to analysis of the second summation over $m$.
\par The second summation is much like the beamforming analysis for ULA systems in~\cite{Cui'22'tcom}. By denoting $a = \sqrt{\frac{d^2}{\lambda}(\frac{1}{r_1}-\frac{1}{r_2})}$ and assuming $r_1 \leq r_2$, the summation could be rewritten as
\begin{equation}
\label{eq: app4 2}
\begin{aligned}
&~~~~\frac{1}{{\bar{M}}} \sum_{m=-M}^{M} e^{j \pi (am)^2} \\
& \mathop{\approx}\limits^{(a)} \frac{1}{M} \int_{0}^{M} e^{j \pi (am)^2} {\rm{d}}m\\
& = \frac{1}{\sqrt{2}aM} \int_{0}^{\sqrt{2}aM} e^{j \frac{\pi}{2}t^2} {\rm{d}}t\\
& = \frac{\int_{0}^{\sqrt{2}aM} \cos(\frac{\pi}{2}t^2) {\rm{d}}t +j\int_{0}^{\sqrt{2}aM} \sin(\frac{\pi}{2}t^2) {\rm{d}}t}{\sqrt{2}aM}\\
& \mathop{=}\limits^{(b)} \frac{C(\mu)+jS(\mu)}{\mu} = G(\mu),
\end{aligned}
\end{equation}
where the approximation (a) is derived by replacing summation with integration and $\mu = \sqrt{2}aM = \sqrt{\frac{2M^2d^2}{\lambda}(\frac{1}{r_1}-\frac{1}{r_2})}$. Equation (b) is derived by denoting the Fresnel functions as $C(\mu) = \int_{0}^{\mu} \cos(\frac{\pi}{2}t^2) {\rm{d}}t$ and $S(\mu) = \int_{0}^{\mu} \sin(\frac{\pi}{2}t^2) {\rm{d}}t$~\cite{Sherman'62'j}.
\par Moreover, if $r_1 > r_2$ we can define $a = \sqrt{\frac{d^2}{\lambda}(\frac{1}{r_2}-\frac{1}{r_1})}$ and the conclusion above still holds. Therefore, we can define $\mu = \sqrt{\frac{2M^2d^2}{\lambda}\left| \frac{1}{r_1}-\frac{1}{r_2} \right|}$, leading to the beamforming gain as
\begin{equation}
\label{eq: app4 3}
\begin{aligned}
g_{\rm{C}} = \left| G(\mu) J_0(\zeta)\right|,
\end{aligned}
\end{equation}
where $\zeta = \frac{2\pi R^2}{\lambda}\left|\frac{1}{4r_1}-\frac{1}{4r_2}\right|$. This completes the proof.
\footnotesize
\balance 
\bibliographystyle{IEEEtran}
\bibliography{IEEEabrv,reference}

\end{document}